\newtheorem{condit}{Conditions}
\title{Electronic markets with multiple submodular buyers}
\author{Allan Borodin}{Department of Computer Science, University of Toronto \and \url{http://www.cs.toronto.edu/~bor/}}{bor@cs.toronto.edu}{}{}
\author{Akash Rakheja}{Department of Computer Science, University of Toronto}{rakheja@cs.toronto.edu}{}{}
\authorrunning{A. Borodin and A. Rakheja} 
\keywords{Equilibria, market clearing, price of anarchy, posted price}
\newtheorem{observation}[theorem]{Observation}
\begin{document}

\maketitle

\begin{abstract}
We discuss the problem of setting prices in an electronic market that has more than one buyer. We assume that there are self-interested sellers each selling a distinct item that has an associated cost. Each buyer has a submodular valuation for purchasing any subset of items. The goal of the sellers is to set a price for their item such that their profit from possibly selling their item to the buyers is maximized. Our most comprehensive results concern a multi copy setting where each seller has $m$ copies of their item and there are $m$ buyers. In this setting, we give a necessary and sufficient condition for the existence of market clearing pure Nash equilibrium. We also show that not all equilibria are market clearing even when this condition is satisfied contrary to what was shown in the case of a single buyer in \cite{Babaioff2014}. 
Finally, we investigate the pricing problem for multiple buyers in the limited supply setting when each seller only has a single copy of their item. 
\end{abstract}


\section{Introduction}

How should sellers price their items in a market? This is arguably the central question in any market setting. The emergence of electronic markets has significantly changed the nature of markets. On the one hand, information collected online has made it possible for vendors to have detailed information on potential buyers. On the other hand, buyers now have easy access to the variety of items different sellers have available and their prices. Buyers are interested in purchasing a bundle of items maximizing their utility for the items purchased. Sellers are interested in maximizing their profit. Furthermore, the size and speed of online markets necessitates sellers to set their prices competitively and quickly.

Following Babaioff, Nisan and Paes Leme \cite{Babaioff2014}, we will view online markets as a full information game in which the sellers are the strategic agents. We are therefore interested in pricing equilibria in combinatorial markets. In a general setting of combinatorial markets with item pricing, we have many buyers, each having a publicly known valuation function $v$ for each subset of purchased items, and many strategic sellers, each having a set of items to sell at a price $p_i$ for item $i$. The utility of a buyer for subset $S$ is $v(S) - p(S)$ where $p(S) = \sum_{i \in S} p_i$. In general, the valuation function $v_j$ for a buyer $j$ can be quite complicated. The strategy of each seller is to set prices to maximize the prices for the items they sell. When items have a cost to produce, the seller wants to maximize their prices minus the costs for items sold. In this generality, the existence and properties (e.g., uniqueness, market clearing or not) of possible pure Nash Equilibria (NE) will depend on the precise setting. 

The main focus of the Babaioff et al paper is for a single buyer and strategic sellers each having a single copy of a distinct item. They establish basic results for different classes of valuation functions. Namely, they consider (in order of decreasing generality), arbitrary, subadditive, XOS, submodular, and gross substitutes valuations\footnote{A subadditive valuation is such that $v(S \cup T) \leq v(S) + v(T)$ $\forall S, T$. A XOS valuation is such that $v(S) = max_{t \in I} \sum_{i \in S} w_{it}$ for $w_{it} \in \mathbb{R}^+$. A gross substitutes valuation is such that if at a price $\textbf{p}$, a set $S$ is purchased and at a price $\textbf{p}' \geq \textbf{p}$, a set $S'$ is purchased then $S \cap \{i | p_i = p'_i\} \subseteq S'$. We will not consider these valuation classes but include these definitions for completeness.}. In our setting, we will restrict attention to submodular valuations and additive valuations. Babaioff et al also discuss multiple buyers in a Bayesian setting and how their results can be extended in the case of costs. In a following paper, Lev, Oren and Boutilier \cite{Lev2015} consider the setting of a single monotone submodular buyer where now each seller has a number of different items for sale. Our contribution will be to further the work of Babaioff et al with regard to more than one submodular buyer in a full information setting. Our work generalizes the previous settings in that we have multiple buyers each with their own valuations. We also assume that sellers have costs to produce an item. We extend the setting of Lev et al to multiple buyers, but restrict to each seller only selling copies of a distinct item. Hence, overall, our setting is incomparable to theirs. 

\section{Related results}

Walrus \cite{walrusJ03} pioneered the study of markets and market equilibrium (for {\it divisible items} being traded) in the late 19th century. Market equilibria is often referred to as competitive equilibria or Walraisian equlibria. The adaption of markets to buyers (having money) and sellers (with divisible items) is called Fisher markets (attributed to Irving Fisher and also dating back to the late 19th century; see Brainard and Scarf \cite{BrainardS00}). With regard to {\it indivisible items}, Gul and Stracchetti \cite{GulS99} provide a fundamental result showing that the class of Gross Substitutes valuations (a strict subset of submodular valuations) is the largest class of valuations containing unit demand buyers that always have a Walraisian equilibrium. Amongst other conditions, such equilibria require that prices are set so that all items are sold (i.e., the prices are a market clearing Nash equilibrum).

The topic of pricing is too extensive to indicate all the related work. It is rather remarkable, however, that precise characterization of pricing equilibria is a relatively recent topic of interest. As already indicated , our work is most closely related to the comprehensive paper of Babaioff et al \cite{Babaioff2014} for a single buyer and the following extension by Lev et al \cite{Lev2015} to sellers who can have more than one item for sale. 
Borodin, Lev and Strangway \cite{Borodin2016} consider another extension of Babaioff et al. Namely, they return to a single item per seller and a single buyer setting but now impose a budget on the buyer. We will not consider budgets although clearly budgets are an important consideration. Budgets raise many additional considerations for buyers and the resulting seller prices. This was already evident in the single buyer case. In our appendix, we will just illustrate how even in a very restricted setting, budgets will complicate the analysis of possible equilibria for multiple additive buyers.

\section{Preliminaries and Summary of Results}

We assume that $N = [n]$ is the set of sellers. Here, we use $[l]$ to denote the set $\{i \in \mathbb{N} | 1 \leq i \leq l\}$. We initially assume that seller $i$ sells item $i$ and has $k \leq m$ copies of this item to possibly sell to $m$ different buyers. We also assume there is a cost $c_i$ to seller $i$ to produce each copy of their item. Each seller $i \in N$ sets a price $p_i$ which is same for all copies of item $i$.

We assume that $M = [m]$ is the set of buyers and each buyer $j \in M$ has a submodular valuation function $v_j \colon 2^{N} \to {\mathbb R}_+$ where ${\mathbb R}_+$ is the set of all non-negative real numbers. We assume that each buyer is interested in buying at most one copy of each item. A submodular function $v \colon 2^{N} \to {\mathbb R}_+$ is a function such that $v(S \cup \{x\}) - v(S) \geq v(T \cup \{x\}) - v(T)$, $\forall S \subset T$ and $x \notin T$. We also assume that $v_j$ is normalized ($v_j(\emptyset) = 0$) and monotone ($v_j(S) \leq v_j(T)$ $\forall S, T \subseteq N$ such that $S \subseteq T$). For a given pricing $\textbf{p} = \{p_i\}_{i \in N}$, the utility of a buyer $j \in M$ for a subset $S \subseteq N$ of items is given by $u_j(S) = v_j(S) - p(S)$ where $p(S) = \sum_{i \in S} p_i$. Each buyer $j \in M$ chooses a set $S_{(j, \textbf{p})}$ such that $u_j(S_{(j, \textbf{p})}) = \max_S \{u_j(S) | S \subseteq N\}$. There can be more than one set that maximizes the utility, in which case a buyer arbitrarily (but deterministically) chooses a set that has the highest cardinality.
Thus every pricing $\textbf{p}$ induces a given allocation of items 
$S_{j,\textbf{p}}$ to each buyer $j \in M$. 
The social welfare $SW(\textbf{p})$ of an allocation induced by a pricing is defined as the sum $\sum_{j \in M} (v_j(S_{j,\textbf{p}}) - c(S_{j, \textbf{p}}))$.

There are different pricing games that can be considered when there are many buyers and sellers. In the mechanism design area, we usually study combinatorial auctions for a single seller and multiple buyers who are the strategic agents. In contrast, in the pricing game as initiated in Babaioff et al, there is one utility maximizing buyer and it is the sellers who are the strategic agents whose actions consist solely of setting prices. But once there is a limited supply (i.e., more buyers than copies) of each item, it is not clear which buyers will obtain a given item. Since we do not view the buyers as being strategic (i.e., their valuations are true and known), either the sellers or an outside mechanism have to determine an order in which to serve buyers. That is, we are essentially considering an online algorithm in which buyers arrive sequentially. In the limited supply case, we will study such an online setting and 
leave other extensions for future work.


A pure Nash Equilibrium (PNE) of the game is given by a pricing $\textbf{p} \in (\mathbb{R}_+)^n$. Under this pricing $\textbf{p}$, no seller can unilaterally increase profit by changing their price. This is an equilibrium in the game between sellers. The profit of a seller $i \in N$ is given by $\alpha_i(p_i - c_i)$ where $\alpha_i$ denotes the number of buyers that buy item $i$ and $c_i$ is the cost of production to seller $i$. The cost of production to one or more sellers may be 0. A market clearing Pure Nash Equilibrium is a PNE such that all copies of all the items are sold by the sellers.
In addition to considering the standard Nash equilibrium, we will sometimes consider $\epsilon$-Nash equilibria for sufficiently small $\epsilon > 0$ in which any individual seller cannot gain more than an additive profit of $\epsilon$ by deviating.

Let $OPT = \max_{\textbf{p}} SW(\textbf{p})$. The {\it price of anarchy} (POA) for a market game instance $I$ is then defined as $\max_{\{\textbf{p}: \textbf{p} \text{ is a PNE}\}} \frac{OPT}{SW(\textbf{p})}$. 
That is, the POA is the worst case approximation induced by a PNE. The {\it price of stability} (POS) is defined as $\min_{\{\textbf{p}: \textbf{p} \text{ is a PNE}\}} \frac{OPT}{SW(\textbf{p})}$. That is, the best case approximation induced by a PNE.

As stated, throughout this paper we assume that the buyer valuation functions $v_j$ are monotone submodular for each $j \in M$. For some results, we further assume that the $v_j$ are additive for one or more buyers. We define $v_j(i | S)$ as $v_j(S \cup \{i\}) - v_j(S)$ for $S\subseteq N$, $i \in N \setminus S$, $j \in M$. We denote by $v_{j,i}$ the marginal value $v_j(i | N \setminus \{i\}) = v_j(N) - v_j(N \setminus \{i\})$ $\forall i \in N$ $\forall j \in M$. We note that $v_j(i | S) \geq v_{j, i}$, $\forall j \in M$ and $\forall S \subseteq N$ $\forall i \in N \setminus S$ by the definition of submodularity.

\subsection{Summary of results}

In sections \ref{sec:multiple-copies} and \ref{sec:many-buyers} (and also Appendix \ref{sec:budget}), we have as many copies of items as we have buyers. Hence, buyers are never in competition for an item. In sections \ref{sec:multiple-copies} and \ref{sec:many-buyers} we provide a rather comprehensive set of results including a characterization for when there is a unique market clearing pure equilibirium and a sufficient condition for when there is a unique equilibrium. 
In the unlimited supply setting with no budgets, we consider the social welfare obtained in equilibria, and the resulting price of anarchy and price of stability.
Appendix \ref{sec:budget} considers a very restricted setting in which a budget is introduced. In this setting, we do not have explicit prices but rather we have a set of conditions that can easily be tested and these conditions are both necessary and sufficient for a market clearing equilibrium.

In section \ref{sec:limited-supply}, we have a limited supply of items.
As stated, we need to have some means to resolve conflicts amongst the buyers. In particular, we will assume that each seller has only one copy of their item. We begin with some general observations that hold no matter how items are allocated to buyers. In section \ref{subsec:one-submodular-buyer}, we consider the restricted setting where there is one submodular buyer and the remaining buyers are additive. We assume that the sellers (or a central authority) have agreed upon a fixed order in which agents arrive and get their choice of items. The negative results (i.e., Examples \ref{Example:4.1}, \ref{Example:4}) therefore hold in the online setting where an adversary sets the order of arrivals. 
This is similar to online posted price mechanisms as studied for the case of a single seller. Here, however, we 
have more than one seller and for any positive claims about equilibria, we will assume that buyers arrive in a fixed order that is known to all sellers.

Our positive results for the limited supply setting as with regard to the existence and characterization of pure Nash equilibira and market clearing equilibria are restricted to some very special settings.
Our negative examples suggest that any extension of our positive results to more general markets (with multiple submodular buyers) will not follow in any immediate way. 

As an indication of the difference between previous results for a single buyer discussed in \cite{Babaioff2014} where there is always a pure Nash equilibrium in the case of submodular valuations, we provide an example where no $\epsilon$-Nash Equilibirum exists in the limited supply setting. 
In the following example there is no $\epsilon$-Nash equilibrium for a sufficiently small $\epsilon$.

\begin{example}
\label{Example:4.1}
Consider the market where $M = [2]$, $N = [2]$ and the valuation function is given by the following table:
\begin{center}
\begin{tabular}{ c| c c c }
 sets & 1 & 2 & 1, 2 \\
 \hline
 $v_1$ & 16 & 16 & 32 \\
 $v_2$ & 30 & 30 & 45 \\
\end{tabular}
\end{center}
$v_1$ is additive and $v_2$ is submodular as can be easily verified. There is no pure $\epsilon$-Nash equilibrium for $\epsilon < 7$ when buyer $1$ is preferred by every seller over buyer $2$ at a given pricing when both the buyers are willing to purchase. At any given pricing, buyer $1$ chooses a set $S_1 \subseteq N$ that it consumes, and then buyer $2$ chooses a subset of $N \setminus S_1$ that maximizes its utility.
\end{example}

\begin{proof}
Let $\epsilon < 7$ and suppose there is an $\epsilon$-Nash Equilibrium $\textbf{p}$ and under pricing $\textbf{p}$, $S_1$ is sold to $1$ and $N \setminus S_1$ to $2$. The equilibrium has to be market clearing because each seller can earn a profit of at least $\epsilon$ by selling to buyer 1 at a price of 16.

Let $S_1 = N$ and $S_2 = \emptyset$. Then, $p_1, p_2 \leq 16$ since otherwise buyer 1 will not purchase those items. Now, if seller $2$ increases the price to 30, buyer 1 will purchase set $\{1\}$ and buyer 2 will purchase $\{2\}$ increasing the profit of seller 2 by more than $\epsilon$. So, there is no equilibrium with $S_1 = N$, $S_2 = \emptyset$.

Let $S_1 = \{1\}$ and $S_2 = \{2\}$. Then, $p_1 \leq 16$ since otherwise buyer 1 will not purchase item $1$ and $16 < p_2 \leq 30$ since buyer 1 does not purchase but buyer 2 does. If $p_2 > 23$, seller $1$ can increase its price to 23 and be sold to buyer 2 increasing the profit by more than $\epsilon$. If $p_2 \leq 23$, seller $2$ can increase the price to 30 and still be sold to buyer 2 increasing the profit by more than $\epsilon$. A similar thing happens for the case of $S_1 = \{2\}$, $S_2 = \{1\}$. So, no equilibria exists in these cases.

Let $S_1 = \emptyset$ and $S_2 = N$. Then. $p_1, p_2 \leq 15$ since otherwise the buyer 2 will purchase at most one item. At this price, buyer 1 will already purchase both the items and hence this case is not possible.

Thus, an $\epsilon$-Nash equilibrium does not exist in this case.
\end{proof}




\section{Characterization of Market Clearing Equilibrium in case of two copies and two submodular buyers}
\label{sec:multiple-copies}

We begin by considering the case of two buyers and two copies of each item, i.e., $M = \{1, 2\}$. In this case, the buyer order does not matter. Also, here {\it market clearing} means that each buyer buys from each seller. We establish the necessary and sufficient conditions for the existence of market clearing pure Nash equilibrium as stated in the following theorem, and proved using observation \ref{Observation:ci}, lemma \ref{Lemma:1} and lemma \ref{Lemma:2}. The proof of this theorem along with other proofs in this section are subsumed in section \ref{sec:many-buyers} but for notational convenience, we state them separately.

\begin{theorem}
\label{Theorem:twobuyers}
The necessary and sufficient condition in the case of 2 buyers ($M = [2] = \{1, 2\}$) and n sellers ($N = [n]$) for the existence of a market clearing pure Nash equilibrium is that $\forall i \in N$ $\frac{v_{2,i} - c_i}{2} \leq v_{1, i} - c_i \leq 2(v_{2,i} - c_i)$ and $\min \{v_{1, i}, v_{2, i}\} \geq c_i$. Moreover, when this condition is satisfied, the market clearing pure Nash equilibrium is unique.
\end{theorem}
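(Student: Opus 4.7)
The plan is to prove both directions of the characterization by a concrete analysis of buyer best responses and seller deviations, leveraging submodularity of the $v_j$. Fix a seller $i$ and note that in any market clearing equilibrium each of the two buyers must choose the full set $N$, so seller $i$'s profit is $2(p_i - c_i)$. I would first observe (matching the role of Observation~\ref{Observation:ci}) that $p_i \geq c_i$ must hold in any PNE, since otherwise the seller could deviate to any price above $\max\{v_{1,i}, v_{2,i}\}$ and obtain zero profit, strictly improving a negative one. Combined with the upper bound on $p_i$ derived below, this yields $\min\{v_{1,i}, v_{2,i}\} \geq c_i$.

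Next I would establish the upper bound $p_i \leq \min\{v_{1,i}, v_{2,i}\}$: for buyer $j$ to prefer $N$ over $N \setminus \{i\}$ we need $p_i \leq v_j(N) - v_j(N \setminus \{i\}) = v_{j,i}$; conversely, by submodularity, $p_k \leq v_{j,k}$ for all $k$ is sufficient to make $N$ utility-maximizing for buyer $j$, since for any $S \subseteq N$ the submodular telescoping gives $v_j(N) - v_j(S) \geq \sum_{k \in N \setminus S} v_{j,k} \geq \sum_{k \in N \setminus S} p_k$. Because the profit $2(p_i - c_i)$ strictly increases in $p_i$ over the interval on which both buyers still take $N$, the equilibrium price must in fact be $p_i = \min\{v_{1,i}, v_{2,i}\}$; this immediately gives uniqueness.

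The main (and most substantive) step is to rule out the deviation in which seller $i$ raises to some $p_i' \in (\min\{v_{1,i}, v_{2,i}\}, \max\{v_{1,i}, v_{2,i}\}]$. Using submodularity I would argue that the buyer with the smaller marginal drops exactly item $i$ and no other item---because for any $k \neq i$ we still have $p_k \leq v_{j,k} \leq v_j(k \mid N \setminus \{i,k\})$, so item $k$ remains attractive in the reduced set---while the buyer with the larger marginal still takes $N$ (using the tie-breaking rule at the endpoint). This deviation therefore yields profit $p_i' - c_i$, maximized at $p_i' = \max\{v_{1,i}, v_{2,i}\}$, so the no-deviation inequality is
\[
2(\min\{v_{1,i}, v_{2,i}\} - c_i) \geq \max\{v_{1,i}, v_{2,i}\} - c_i,
\]
which rearranges to $\frac{v_{2,i}-c_i}{2} \leq v_{1,i}-c_i \leq 2(v_{2,i}-c_i)$. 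Sufficiency then follows by setting $p_i := \min\{v_{1,i}, v_{2,i}\}$ and rerunning the same casework (deviations that keep both buyers are dominated, and deviations that lose both buyers give at most zero profit). The main obstacle is justifying that a single price raise causes the affected buyer to drop exactly one item rather than several; this is precisely where submodularity is used nontrivially, and the same idea will need to be extended when scaling to $m$ buyers in Section~\ref{sec:many-buyers}.
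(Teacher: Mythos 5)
Your proposal is correct and follows essentially the same route as the paper's proof (Observation~\ref{Observation:ci} and Lemmas~\ref{Lemma:1} and~\ref{Lemma:2}): you derive the cost condition and the bound $p_i \leq \min\{v_{1,i},v_{2,i}\}$ from market clearing, obtain the ratio condition from the deviation to $\max\{v_{1,i},v_{2,i}\}$, and use the same submodular-marginal argument (the paper's Observation~\ref{Observation:lp}) to show that a price raise above a buyer's marginal $v_{j,i}$ causes that buyer to drop exactly item $i$ while all other items remain sold. The step you flag as the main obstacle is resolved exactly as in the paper, so no gap remains.
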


We contrast this theorem by giving an example where the condition is not satisfied and no $\epsilon$-Nash equilibrium exists.

\begin{example}
\label{Example:noequi}
Consider the market where $M = [2]$, $N = [2]$, the cost to sellers is $0$ and the valuation function is given by the following table:
\begin{center}
\begin{tabular}{ c| c c c c c c c }
 sets & 1 & 2 & 1, 2 \\
 \hline
 $v_1$ & 200 & 210 & 220 \\
 $v_2$ & 70 & 50 & 120 \\
\end{tabular}
\end{center}
Valuation $v_1$ is submodular and $v_2$ is additive as can be easily verified. We will show that for $\epsilon < 5$, there cannot be a pure $\epsilon$-Nash equilibrium for these valuations. 
\end{example}

\begin{proof}
Let $\textbf{p}$ be an $\epsilon$-Nash equilibrium for some $\epsilon < 5$. Let $S_1$ be the set purchased by buyer 1 and $S_2$ be the set purchased by buyer 2 at this pricing. There are four possibilities for each $S_1$ and $S_2$ namely $N, \{1\}, \{2\}$ and $\emptyset$. We will consider all possibilities and in each case show that some seller can improve upon their revenue by deviating. We do not need to consider the combinations where $S_1 \cup S_2 \neq N$ since every seller can sell to buyer 2 by setting a price $50$ earning strictly more profit than $\epsilon$ and hence $S_1 \cup S_2 = N$. 

\begin{itemize}
\item
If $S_1 = S_2 = N$, then $p_1 \leq 10$ since otherwise buyer 1 will prefer $\{2\}$ over $N$ and hence seller 1 can profit by more than $\epsilon$ by setting a price of $70$ and just selling to buyer 2.

\item 
If $S_1 = \{1\}$ and $S_2 = N$, then $p_2 \leq 50$ and $p_1 \leq p_2 - 10$ since otherwise buyer 1 will prefer $\{2\}$ over $\{1\}$. If $p_1 > 25$ then by setting a price of $p_1 + 9$, seller 2 will earn a profit of $2p_1 + 18 > 68$ increasing it by more than $\epsilon$. If $p_1 \leq 25$, seller 1 can profit by more than $\epsilon$ by setting a price of $70$ and just selling to buyer 2.

\item 
The case of $S_1 = \emptyset$ and $S_2 = N$ is not a possible equilibrium since in this case $p_1 \leq 70$ and hence buyer 1 will prefer $\{1\}$ over $\emptyset$.

\item 
If $S_1 = \{2\}$ and $S_2 = N$, then $p_1 \leq 70$, $p_2 \leq 50$ and $p_1 \geq p_2 - 10$ since otherwise buyer 1 will prefer $\{1\}$ over $\{2\}$. If $p_1 \leq 60$ then by setting a price of $70$, seller 1 will increase his profit by more than $\epsilon$. If $p_2 \leq 47.5$ and $p_1 > 60$, seller 2 can profit by more than $\epsilon$ by setting a price of $50$. If $p_1 > 60$ and $p_2 > 47.5$, then by setting a price of $37.5$ seller 1 can sell to both buyers earning a profit of $75$ increasing it by at least $\epsilon$.

\item 
The case of $S_1 = N$ and $S_2 \neq N$ is not possible since in this case $p_1 \leq 10$ and $p_2 \leq 20$, so buyer 2 will purchase $N$.

\item 
If $S_1 = \{2\}$ and $S_2 = \{1\}$, then $p_1 \leq 70$ and $p_2 > 50$. By setting a price of 40, seller 1 can sell to both the buyers and earn a profit of $80$ increasing the profit by at least $\epsilon$.

\item 
If $S_1 = \{1\}$ and $S_2 = \{2\}$, then $p_1 > 70$ and $p_2 \leq 50$. This case is not possible since at this pricing, buyer 1 will prefer $\{2\}$ over $\{1\}$.

\end{itemize}

So, $\textbf{p}$ is not an $\epsilon$-Nash equilibrium for $\epsilon < 5$. Since $\textbf{p}$ is arbitrary, there is no $\epsilon$-Nash equilibrium for $\epsilon < 5$.
\end{proof}



We first prove that the given condition is a necessary condition. We start by making three observations that will be helpful in the proof.

\begin{observation}
\label{Observation:lp}
For a buyer with submodular valuation function $v$, if some item $i$ has a price of at most $v_i = v(i|N \setminus \{i\})$, the buyer will prefer to purchase item $i$.
\end{observation}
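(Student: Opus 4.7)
The plan is to show that for any set $S$ not containing $i$, moving to $S \cup \{i\}$ is weakly utility-improving, so the cardinality tie-breaking rule in the buyer's selection procedure forces the chosen bundle to contain $i$.

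First I would fix any $S \subseteq N \setminus \{i\}$ and compute the marginal change in utility from adding $i$ to $S$:
\[
u(S \cup \{i\}) - u(S) \;=\; v(i \mid S) - p_i.
\]
The key step is then to bound $v(i \mid S)$ from below using submodularity. Since $S \subseteq N \setminus \{i\}$, the decreasing-marginal-returns property of $v$ gives
\[
v(i \mid S) \;\geq\; v(i \mid N \setminus \{i\}) \;=\; v_i.
\]
Combined with the hypothesis $p_i \leq v_i$, this yields $u(S \cup \{i\}) - u(S) \geq v_i - p_i \geq 0$.

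To conclude, let $S^*$ be a utility-maximizing set and suppose $i \notin S^*$. By the inequality above, $S^* \cup \{i\}$ achieves utility at least $u(S^*)$, so it is also utility-maximizing. Since $|S^* \cup \{i\}| > |S^*|$, the tie-breaking rule stated in the preliminaries (the buyer deterministically selects a utility-maximizer of highest cardinality) forces the chosen bundle to contain $i$, so the buyer indeed purchases item $i$.

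I do not foresee any real obstacle here; the argument is a direct one-line application of submodularity, and the only subtle point is making sure the tie-breaking convention is invoked to handle the case $p_i = v_i$, where adding $i$ leaves utility unchanged rather than strictly increasing it.
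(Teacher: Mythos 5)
Your proof is correct and follows essentially the same route as the paper's: bound $v(i \mid S)$ below by $v_i$ via submodularity, observe that adding $i$ is weakly utility-improving when $p_i \leq v_i$, and invoke the maximum-cardinality tie-breaking rule to handle the equality case. No further comment is needed.
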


\begin{proof}
Let us consider a pricing $\textbf{p}$ such that $p_i \leq v_{i}$. Let us assume that the buyer will buy a set $S$ of items that does not contain $i$ under this pricing $\textbf{p}$. Then, $v(i | S) = v(S \cup \{i\}) - v(S) \geq v_{i} \geq p_i$ (since $v$ is submodular) and hence $v(S \cup \{i\}) - p(S \cup \{i\}) \geq v(S) - p(S)$, so under pricing $\textbf{p}$, the buyer will prefer to purchase $i$ due to the tie-breaking rule which breaks ties in favor of larger sets. 
\end{proof}

\begin{observation}
\label{Observation:prefbuy}
For a buyer with submodular valuation function $v$, if some item $i$ has a price of at most $v(i|S)$ for some $S \subseteq N$, $i \notin S$, the buyer will prefer to purchase $S \cup \{i\}$ over $S$.
\end{observation}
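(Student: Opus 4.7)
The plan is to compute the utility gap $u(S \cup \{i\}) - u(S)$ directly, observe that it is non-negative by hypothesis, and then invoke the tie-breaking rule from the Preliminaries to handle the boundary case. This mirrors the proof of Observation~\ref{Observation:lp}, but is simpler because the statement here is a pairwise comparison between two specific sets rather than a claim about the buyer's globally chosen bundle.

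First, I would unpack the utility definitions: $u(S \cup \{i\}) = v(S \cup \{i\}) - p(S \cup \{i\}) = v(S) + v(i \mid S) - p(S) - p_i$, while $u(S) = v(S) - p(S)$. Subtracting yields $u(S \cup \{i\}) - u(S) = v(i \mid S) - p_i$, which is $\geq 0$ by the assumption that $p_i \leq v(i \mid S)$.

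Second, I would split into the strict and equal cases. If $v(i \mid S) > p_i$, then $S \cup \{i\}$ yields strictly higher utility than $S$, so the buyer strictly prefers $S \cup \{i\}$. If $v(i \mid S) = p_i$, both sets give the same utility, so the tie-breaking rule stated in the Preliminaries (the buyer deterministically selects a utility-maximizer of largest cardinality) chooses $S \cup \{i\}$ over the strict subset $S$. Either way, the preference stated in the observation holds.

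I do not anticipate any real obstacle here; the observation is essentially a reformulation of the definition of marginal value combined with the tie-breaking convention, and, in contrast to Observation~\ref{Observation:lp}, it requires no appeal to submodularity because we are comparing the two fixed sets $S$ and $S \cup \{i\}$ rather than arguing that $i$ must appear in the buyer's globally optimal bundle. Submodularity would only be needed to strengthen the conclusion from a pairwise comparison to a statement about the buyer's actual chosen set.
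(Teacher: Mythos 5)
Your proof is correct and matches the paper's own argument: both compute $u(S\cup\{i\})-u(S)=v(i\mid S)-p_i\geq 0$ and then invoke the cardinality tie-breaking rule for the equality case. Your side remark is also accurate --- the paper's proof of this observation likewise makes no use of submodularity, in contrast to Observation~\ref{Observation:lp}.
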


\begin{proof}
Let us consider a pricing $\textbf{p}$ such that $p_i \leq v(i|S)$. Then, $v(i | S) = v(S \cup \{i\}) - v(S) \geq p_i$ and hence $v(S \cup \{i\}) - p(S \cup \{i\}) \geq v(S) - p(S)$, so under pricing $\textbf{p}$, the buyer will prefer to purchase $S \cup \{i\}$ over $S$ due to the tie-breaking rule which breaks ties in favor of larger sets. 
\end{proof}


\begin{observation}
\label{Observation:ci}
Let $N = [n]$, $M = [2]$ and suppose there exists $i \in N$ such that $\min \{v_{1, i}, v_{2, i}\} < c_i$. Then, there is no market clearing pure Nash equilibrium.
\end{observation}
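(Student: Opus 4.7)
The plan is to argue by contradiction. Suppose that $\textbf{p}$ is a market clearing pure Nash equilibrium. Since the market has two copies of every item and every buyer purchases at most one copy of any given item, the market clearing property forces each of the two buyers to purchase item $i$; more strongly, each buyer $j$ must purchase all of $N$. So let $S_j = N$ denote the bundle chosen by buyer $j \in \{1,2\}$ under $\textbf{p}$.

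Next I would extract an upper bound on $p_i$ from the buyers' optimality. Since $S_j = N$ is utility-maximizing for buyer $j$, comparing it to $N \setminus \{i\}$ gives
\[
v_j(N) - p(N) \;\geq\; v_j(N \setminus \{i\}) - p(N \setminus \{i\}),
\]
which rearranges to $p_i \leq v_j(N) - v_j(N \setminus \{i\}) = v_{j,i}$. Applying this for both $j=1$ and $j=2$ and using the hypothesis,
\[
p_i \;\leq\; \min\{v_{1,i}, v_{2,i}\} \;<\; c_i.
\]

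The final step is to exhibit a profitable deviation for seller $i$. Under $\textbf{p}$, seller $i$ sells both copies at price $p_i < c_i$, incurring profit $2(p_i - c_i) < 0$. By deviating to a price $p_i'$ large enough that neither buyer wants item $i$ (for instance any $p_i' > \max_j v_j(N)$), seller $i$ sells nothing and achieves profit $0$, a strict improvement. This contradicts the assumption that $\textbf{p}$ is a pure Nash equilibrium, so no market clearing pure Nash equilibrium exists.

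No step here is really a serious obstacle; the only substantive observation is translating market clearing under two copies and two buyers into the statement that both buyers must purchase the full set $N$, after which submodularity is not even needed for this particular observation and the deviation argument is immediate.
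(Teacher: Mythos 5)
Your proof is correct and follows essentially the same route as the paper's: bound $p_i \leq \min\{v_{1,i},v_{2,i}\}$ by comparing $N$ with $N\setminus\{i\}$ for each buyer, conclude the seller's profit $2(p_i-c_i)$ is negative, and exhibit a deviation yielding profit $0$. The only cosmetic difference is the deviation price (the paper uses $p_i'=c_i$, you use a price high enough that nothing sells), and both give the same contradiction.
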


\begin{proof}
Let us assume, for the sake of contradiction, that there is a market clearing pure Nash equilibrium given by pricing $\textbf{p}$. $p_i \leq v_{2, i}$ since if $p_i > v_{2, i} = v_2(N) - v_2(N \setminus \{i\})$, then $v_2(N \setminus \{i\}) - p(N \setminus \{i\}) > v_2(N) - p(N)$ since $p_i = p(N) - p(N \setminus \{i\})$ and hence buyer 2 will purchase $N \setminus \{i\}$ instead of $N$. Similarly, $p_i \leq v_{1, i}$. Thus, profit of seller $i$ is $2(p_i - c_i) \leq 2(\min \{v_{1, i}, v_{2, i}\} - c_i) < 0$ and hence seller $i$ will earn more profit by deviating to price $c_i$. Since seller $i$ has an incentive to deviate, $\textbf{p}$ is not an equilibrium.

Since $\textbf{p}$ was assumed to be an arbitrary market clearing pure Nash equilibrium, there is none.
\end{proof}

\begin{lemma}
\label{Lemma:1}
Let $N = [n]$, $M = [2]$ and suppose there exists $i \in N$ such that $v_{1,i} - c_i > 2(v_{2, i} - c_i)$. Then, there is no market clearing pure Nash equilibrium.
\end{lemma}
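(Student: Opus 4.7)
The plan is to assume, for contradiction, that $\mathbf{p}$ is a market clearing pure Nash equilibrium and then exhibit a strictly profitable unilateral deviation for seller $i$. First I would apply Observation \ref{Observation:ci} to reduce to the case $\min\{v_{1,i}, v_{2,i}\} \geq c_i$ (otherwise no market clearing PNE exists and we are done); together with the hypothesis $v_{1,i} - c_i > 2(v_{2,i} - c_i)$, the inequality $v_{2,i} \geq c_i$ also yields $v_{1,i} > v_{2,i}$, which is what drives the deviation. Because $\mathbf{p}$ is market clearing, both buyers purchase $N$, and removing item $i$ from either buyer's bundle can only (weakly) decrease utility; this forces $p_i \leq v_{1,i}$ and $p_i \leq v_{2,i}$, so $p_i \leq v_{2,i}$ and seller $i$'s current profit is at most $2(v_{2,i} - c_i)$.

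The deviation I would consider is to raise the price of item $i$ to $p_i' := v_{1,i}$. I would verify (i) that buyer $1$ still purchases $N$: for every item $k$ we have $p_k \leq v_{1,k}$ (prices of other items are unchanged, and at $k = i$ we have equality), so by submodularity $p_k \leq v_1(k \mid S)$ for every $S \not\ni k$, and a repeated application of Observation \ref{Observation:prefbuy} together with the tie-breaking rule (favoring the largest set) makes $N$ buyer $1$'s chosen bundle. I would then verify (ii) that buyer $2$ chooses exactly $N \setminus \{i\}$: the same submodular argument applied to items $k \neq i$ forces buyer $2$'s chosen set to contain every such $k$, but now $p_i' = v_{1,i} > v_{2,i}$ gives $u_2(N \setminus \{i\}) > u_2(N)$, so item $i$ is dropped. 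Hence seller $i$'s post-deviation profit is $v_{1,i} - c_i$, which by the hypothesis strictly exceeds $2(v_{2,i} - c_i) \geq 2(p_i - c_i)$, the original profit, contradicting the PNE property.

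The main (and fairly mild) technical point is justifying the buyer behavior at the boundary price $p_i' = v_{1,i}$: for buyer $1$ there is a tie between bundles containing and not containing item $i$, but the tie-breaking rule favoring larger cardinality cleanly resolves this in favor of $N$; for buyer $2$, there is no tie because the strict inequality $v_{1,i} > v_{2,i}$ makes $N \setminus \{i\}$ strictly preferred to $N$. Aside from this, the proof is a short deduction using only the hypothesis and Observations \ref{Observation:lp}--\ref{Observation:ci} already in hand.
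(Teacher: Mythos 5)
Your proof is correct and follows essentially the same route as the paper: bound $p_i \leq v_{2,i}$ from market clearing, deviate seller $i$ to $p_i' = v_{1,i}$, use Observation~\ref{Observation:lp} to keep buyer $1$ purchasing, and conclude the new profit exceeds $2(v_{2,i}-c_i) \geq 2(p_i-c_i)$. Your preliminary reduction via Observation~\ref{Observation:ci} to the case $\min\{v_{1,i},v_{2,i}\}\geq c_i$ (hence $v_{1,i}>v_{2,i}$) is a small extra care the paper's own proof skips, and your verification of buyer $2$'s behavior is not needed since the profit is at least $v_{1,i}-c_i$ either way.
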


\begin{proof}
Let us assume, for the sake of contradiction, that there is a market clearing pure Nash equilibrium given by pricing $\textbf{p}$. $p_i \leq v_{2, i}$ since if $p_i > v_{2, i} = v_2(N) - v_2(N \setminus \{i\})$, then $v_2(N \setminus \{i\}) - p(N \setminus \{i\}) > v_2(N) - p(N)$ since $p_i = p(N) - p(N \setminus \{i\})$ and hence buyer 2 will purchase $N \setminus \{i\}$ instead of $N$.


Since $\textbf{p}$ is market clearing, both buyers will purchase all items and hence the profit of seller $i$ will be $2(p_i - c_i) \leq 2(v_{2, i} - c_i) < v_{1, i} - c_i$. If seller $i$ instead sets price $p'_i = v_{1, i}$, buyer $1$ will still purchase item $i$ due to observation \ref{Observation:lp} and new profit of seller $i$ will be $v_{1, i} - c_i$. So, seller $i$ has an incentive to deviate from $\textbf{p}$ and hence $\textbf{p}$ is not an equilibrium.

Since $\textbf{p}$ was assumed to be an arbitrary market clearing pure Nash equilibrium, there is none.
\end{proof}

Lemma \ref{Lemma:1} proves that $\forall i \in N$ $v_{1, i} - c_i \leq 2(v_{2,i} - c_i)$ is a necessary condition for the existence of a market clearing pure Nash equilibrium in the case of 2 buyers. Similarly, $\forall i \in N$ $v_{2, i} - c_i \leq 2(v_{1, i} - c_i)$ is also a necessary condition. Thus, $\forall i \in N$ $\frac{v_{2,i} - c_i}{2} \leq v_{1, i} - c_i \leq 2(v_{2,i} - c_i)$ is a necessary condition for the existence of a market clearing pure Nash equilibrium in the case of 2 buyers. Also, from observation \ref{Observation:ci}, $\min \{v_{1, i}, v_{2, i}\} \geq c_i$ is a necessary condition. Now, we will prove that these two conditions together are also a sufficient condition.

\begin{lemma}
\label{Lemma:2}
There is a unique market clearing pure Nash equilibrium $\textbf{p}$ in the case of $M = [2]$ when $\forall i \in N$ $\frac{v_{2,i} - c_i}{2} \leq v_{1, i} - c_i \leq 2(v_{2,i} - c_i)$ and $\min \{v_{1, i}, v_{2, i}\} \geq c_i$. Namely, $p_i = \min \{v_{1, i}, v_{2, i}\}$ $\forall i \in N$.
\end{lemma}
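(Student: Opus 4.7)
The plan is to prove Lemma \ref{Lemma:2} in two stages: first show that the pricing $p_i = \min\{v_{1,i}, v_{2,i}\}$ is a market clearing PNE, then show that any market clearing PNE must coincide with it.

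For the existence part, I would first verify market clearing under the proposed prices. The key tool is a ``selection lemma'' which, by submodularity and the tie-breaking rule, says that if $p_k \leq v_{j,k}$ for every $k \in N$ then buyer $j$'s utility is maximized at $N$; this follows because for any $S \subseteq N$, writing $v_j(N)-v_j(S)$ as a telescoping sum of marginals and applying submodularity gives $v_j(N)-v_j(S) \geq \sum_{k \in N\setminus S} v_{j,k} \geq \sum_{k \in N\setminus S} p_k = p(N)-p(S)$. This applies to both buyers at our candidate pricing, so each buys $N$. Next I check that no seller $i$ has a profitable deviation. Fix $i$ and, without loss of generality, suppose $v_{1,i} \leq v_{2,i}$, so $p_i = v_{1,i}$ and the current profit is $2(v_{1,i}-c_i) \geq 0$. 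A deviation $p_i' > v_{2,i}$ loses both buyers (by the ``$p_i > v_{j,i}$ kills buyer $j$'' argument already used in Lemma \ref{Lemma:1}), yielding profit $0$. A deviation $p_i' \in (v_{1,i}, v_{2,i}]$ keeps only buyer $2$, giving profit at most $v_{2,i}-c_i$, which is $\leq 2(v_{1,i}-c_i)$ exactly by the hypothesis $v_{2,i}-c_i \leq 2(v_{1,i}-c_i)$. A deviation $p_i' < v_{1,i}$ keeps both buyers but decreases the per-unit margin, so profit drops. The symmetric case $v_{2,i} \leq v_{1,i}$ uses the other half of the hypothesis.

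For uniqueness, let $\textbf{p}$ be any market clearing PNE. Since both buyers purchase $N$, the marginal argument used in Lemma \ref{Lemma:1} forces $p_i \leq v_{j,i}$ for each $j \in \{1,2\}$ and each $i$, hence $p_i \leq \min\{v_{1,i},v_{2,i}\}$. Suppose strict inequality held at some coordinate $i$. Seller $i$ could deviate to $p_i' = \min\{v_{1,i},v_{2,i}\}$; the other prices still satisfy $p_k \leq v_{j,k}$, so applying the same submodular selection argument as above (now at the modified pricing) shows that each buyer still purchases $N$. Therefore seller $i$'s profit strictly increases from $2(p_i-c_i)$ to $2(p_i'-c_i)$, contradicting the equilibrium assumption. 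Hence $p_i = \min\{v_{1,i},v_{2,i}\}$ for every $i$.

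The main technical obstacle is the deviation analysis in the existence half: to conclude that the only two ``break points'' worth considering are $v_{1,i}$ and $v_{2,i}$, one has to rule out, via submodularity, the possibility that some intermediate price causes a buyer to restructure their bundle in a way that still includes $i$ (for example, dropping some other item to reduce the marginal cost of $i$). This is handled cleanly by the selection lemma applied only with the price of $i$ changed: since the other prices $p_k = \min\{v_{1,k},v_{2,k}\}$ are unchanged and still satisfy $p_k \leq v_{j,k}$, buyer $j$'s optimal bundle under the new price is $N$ if $p_i' \leq v_{j,i}$ and $N\setminus\{i\}$ otherwise, which reduces the deviation space to a one-dimensional case analysis.
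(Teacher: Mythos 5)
Your proof is correct and takes essentially the same route as the paper's: your ``selection lemma'' is just the telescoping form of Observation~\ref{Observation:lp}, the deviation analysis reduces to the same three price intervals with break points at $v_{1,i}$ and $v_{2,i}$ (including the careful handling, via submodularity, of the possibility that a buyer keeps item $i$ by restructuring its bundle, which the paper handles by showing the purchased set containing $i$ must be $N$), and the uniqueness argument is identical.
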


\begin{proof}
Let $p_i = \min \{v_{1, i}, v_{2, i}\}$ $\forall i \in N$.


Since $\forall i \in N$ $p_i \leq v_{1, i}$ and $p_i \leq v_{2, i}$, it is clear that both the buyers will purchase all the items due to observation \ref{Observation:lp}. Thus, there is no profit in reducing the price for any seller. The profit in this case for seller $i$ is $2(p_i - c_i) \geq 0$.

Without loss of generality, we consider the case where for some $i \in N$ $v_{1, i} \geq v_{2, i}$. In this case, $p_i = v_{2, i}$. Let us assume that seller $i$ increases the price of his item to $p'_i > p_i$ and this new pricing is called $\textbf{p}'$. Let us also assume that buyer $2$ purchases a set $S$ containing $i$ under this new pricing. If $S \neq N$, then for all $j \in N \setminus S$, $v_2(j | S) = v_2(S \cup \{j\}) - v_2(S) \geq v_2(N) - v_2(N \setminus j) = v_{2, j} \geq p_j$, so $v_2(S \cup \{j\}) - p'(S \cup \{j\}) \geq v_2(S) - p'(S)$ since $p'_j = p_j$. Hence buyer $2$ will prefer $S \cup \{j\}$ over $S$ due to tie-breaking rule. This is a contradiction and hence $S = N$. Since $p'_i > v_{2, i} = v_2(N) - v_2(N \setminus \{i\})$, $v_2(N \setminus \{i\}) - p'(N \setminus \{i\}) > v_2(N) - p'(N)$, so buyer 2 will not purchase $i$. Similarly if seller $i$ increases the price to more than $v_{1,i}$, buyer $1$ will not purchase item $i$. So, the maximum profit that seller $i$ can earn by increasing the price is $v_{1, i} - c_i$. Since $v_{1, i} - c_i \leq 2(v_{2, i} - c_i)$, there is no incentive to increase the price. Therefore, $\textbf{p}$ is a pure Nash equilibrium.

Let us assume that there is another market clearing pure Nash Equilibrium $\textbf{p}'$ such that $p'_i \neq \min \{v_{1, i}, v_{2, i}\}$ for some $i \in N$. If $p'_i > v_{1, i}$, then buyer $1$ will prefer to buy set $N \setminus \{i\}$ over set $N$ since $v_1(N) - v_1(N \setminus \{i\}) = v_{1, i} < p'_i = p'(N) - p'(N \setminus \{i\})$ and hence $v_1(N) - p'(N) < v_1(N \setminus \{i\}) - p'(N \setminus \{i\})$. Therefore, $p'_i \leq v_{1, i}$. Similarly, $p'_i \leq v_{2, i}$. If $p'_i < \min \{v_{1, i}, v_{2, i}\}$, then by setting price $\min \{v_{1, i}, v_{2, i}\}$, seller $i$ will still sell both the copies of his item due to observation \ref{Observation:lp} and hence will earn strictly more profit. Thus, $\textbf{p}'$ is not a market clearing pure Nash Equilibrium if $p'_i \neq \min \{v_{1, i}, v_{2, i}\}$ for some $i \in N$.

Hence, $\textbf{p}$ is the unique market clearing pure Nash equilibrium.
\end{proof}

When $c_i > \min \{v_{1, i}, v_{2, i}\}$ for some $i$, there are no market clearing pure Nash equilibria as stated in Observation \ref{Observation:ci}.
Market clearing pure Nash equilibria are also total welfare maximizing in the case that $\forall i \in N$ $c_i \leq \min \{v_{1, i}, v_{2, i}\}$ since total welfare is the sum of the utility of all buyers and sellers. When buyer 1 buys a set $S_1$ and buyer 2 buys a set $S_2$, the money is transferred from buyers to sellers and hence the total welfare is $(v_1(S_1) - p(S_1)) + (v_2(S_2) - p(S_2)) + (p(S_1) - c(S_1)) + (p(S_2) - c(S_2)) = v_1(S_1) - c(S_1) + v_2(S_2) - c(S_2)$ which is maximum when $S_1 = S_2 = N$, that is when the market is cleared. 
Thus, when market clearing pure Nash equilibria exists, the price of stability is 1.

For the case of a single submodular buyer, and each seller having a single copy of their distinct item, Babaioff et al \cite{Babaioff2014} show that there is a unique pure Nash equilibrium (NE) which is market clearing and that no other pure NE can exist. That is, the one and only pure Nash is market clearing. We now show that in the setting of two buyers and each seller having two copies of their item, there can be other Nash equilibrium that are not market clearing. These non-market clearing NE can exist even when the necessary and sufficient conditions hold for the unique market clearing NE. Consider the case of two additive buyers, where $v_{1, i} - c_i = 2(v_{2, i} - c_i) \geq 0$ for some seller $i \in N$, the seller $i$ is indifferent between setting price $v_{1, i}$ and $v_{2, i}$ since in both cases, the seller will earn equal profit. So, the pure Nash equilibrium in which it sets a price of $v_{1, i}$ is not market clearing since buyer 2 will not purchase at this price. So, in case $v_{1, i} - c_i = 2(v_{2, i} - c_i)$ or $2(v_{1, i} - c_i) = v_{2, i} - c_i$ for some $i \in N$, there are pure Nash Equilibria which are not market clearing. Also in the case of two additive buyers, if $v_{1, i} = v_{2, i} = c_i$, seller $i$ will always earn non-positive profit and hence can set any arbitrarily high price, and raising or lowering that price does not increase the profit of seller $i$. So in this case as well, there can be many pure Nash Equilibria which are not market clearing. Furthermore, even if the inequalities in the characterization of market clearing equilibria are strict inequalities, there can still be pure Nash equilibria that are not market clearing. 
That is, even when $\forall i \in N$ $\frac{v_{2,i} - c_i}{2} < v_{1, i} - c_i < 2(v_{2,i} - c_i)$ and $\min \{v_{1, i}, v_{2, i}\} > c_i$, the uniqueness of pure Nash equilibrium is not guaranteed. In particular, there can sometimes be pure Nash Equilibrium that are non-market clearing when both the buyers are submodular.

\begin{example}
\label{Example:2}
Consider the market where $M = [2]$, $N = [3]$, the cost to sellers is $0$ and the valuation function is given by the following table:
\begin{center}
\begin{tabular}{ c| c c c c c c c }
 sets & 1 & 2 & 3 & 1, 2 & 1, 3 & 2, 3 & 1, 2, 3 \\
 \hline
 $v_1$ & 90 & 90 & 80 & 100 & 100 & 101 & 110 \\
 $v_2$ & 70 & 80 & 85 & 86 & 90 & 100 & 105 \\
\end{tabular}
\end{center}
Both the valuation functions are submodular as can be easily verified. The pricing $\textbf{p} = (10, 10, 20)$ with $S_1 = \{1, 2\}$ and $S_2 = \{2, 3\}$ is a pure Nash equilibrium which is not market clearing.
\end{example}

In contrast to the previous example, we can show the uniqueness of pure Nash equilibrium when at least one of the valuation functions is additive. Without loss of generality, we assume that $v_1$ is submodular and $v_2$ is additive. 

\begin{theorem}
\label{Theorem:twobuyersunique}
When $N = [n]$, $M = [2]$, $v_1$ is submodular, $v_2$ is additive, and $\forall i \in N$ $\frac{v_{2,i} - c_i}{2} < v_{1, i} - c_i < 2(v_{2,i} - c_i)$ and $\min \{v_{1, i}, v_{2, i}\} > c_i$ then there is only one pure Nash Equilibrium and that equilibrium is market clearing. This equilibrium is given by $p_i = \min \{v_{1, i}, v_{2, i}\}$ $\forall i \in N$.
\end{theorem}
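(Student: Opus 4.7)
The plan is to combine Lemma~\ref{Lemma:2} (which already establishes that $\textbf{p}^* := (\min\{v_{1,i}, v_{2,i}\})_{i \in N}$ is the unique \emph{market clearing} PNE, since the strict hypotheses imply the weak ones) with a separate argument that \emph{every} PNE under the hypotheses of Theorem~\ref{Theorem:twobuyersunique} must be market clearing. Uniqueness then follows immediately.

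The crucial preliminary observation, used throughout, is that each seller $i$ can unilaterally guarantee a profit of $2(\min\{v_{1,i}, v_{2,i}\} - c_i)$, which is strictly positive by hypothesis. Indeed, setting $p_i = \min\{v_{1,i}, v_{2,i}\}$ forces buyer~$1$ to purchase $i$ by Observation~\ref{Observation:lp} and forces the additive buyer~$2$ to purchase $i$ since $p_i \leq v_{2,i}$, and this happens regardless of the other sellers' prices. Hence in any PNE $\textbf{p}'$, every seller $i$ earns at least $2(\min\{v_{1,i}, v_{2,i}\} - c_i)$.

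Now I would assume $\textbf{p}'$ is a PNE that is not market clearing and derive a contradiction by case analysis on which buyer fails to buy some item $i$.

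\textbf{Case 1: Buyer~$2$ does not buy $i$.} Additivity of $v_2$ gives $p'_i > v_{2,i}$, and the profit-guarantee forces buyer~$1$ to purchase $i$ (else seller $i$'s profit is $0$). So buyer~$1$ chooses some $S_1 \ni i$ with $p'_i \leq v_1(i \mid S_1 \setminus \{i\})$ and earns profit $p'_i - c_i \geq 2(\min\{v_{1,i}, v_{2,i}\} - c_i)$. Using the strict inequality $v_{1,i} - c_i < 2(v_{2,i} - c_i)$ (or $v_{1,i} > c_i$ in the other subcase), this yields $p'_i > v_{1,i}$. By submodularity $v_1(i \mid S_1 \setminus \{i\}) \geq p'_i > v_{1,i} = v_1(i \mid N \setminus \{i\})$ forces $S_1 \subsetneq N$. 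Pick $j \in N \setminus S_1$. Then seller~$j$ sells to at most buyer~$2$, hence earns $\leq v_{2,j} - c_j$, which is strictly less than $2(\min\{v_{1,j}, v_{2,j}\} - c_j)$ by the strict inequalities, contradicting the PNE guarantee for seller~$j$.

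\textbf{Case 2: Buyer~$1$ does not buy $i$ but buyer~$2$ does.} Then seller~$i$'s profit is $p'_i - c_i \leq v_{2,i} - c_i < 2(\min\{v_{1,i}, v_{2,i}\} - c_i)$ directly from the two strict inequalities (one handles $v_{1,i} \leq v_{2,i}$, the other handles $v_{1,i} > v_{2,i}$ together with $v_{2,i} > c_i$), so seller~$i$ strictly improves by deviating to $p_i = \min\{v_{1,i}, v_{2,i}\}$, a contradiction. \textbf{Case 3:} neither buys $i$ gives seller~$i$ profit $0$, another immediate contradiction.

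The main obstacle is Case~1: it is the only case where seller~$i$'s own best-response does not immediately rule out the candidate PNE, and one must propagate the failure via submodularity to a \emph{different} seller $j$ whose sales are thereby restricted to buyer~$2$. The other cases are disposed of directly by the strict inequalities on $v_{1,i}, v_{2,i}, c_i$.
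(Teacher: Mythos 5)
Your proof is correct, and while it runs on the same engine as the paper's (Observation~\ref{Observation:lp} plus additivity of $v_2$ to guarantee each seller a floor of $2(\min\{v_{1,i},v_{2,i}\}-c_i)$, and the strict inequalities to make any single-buyer sale fall below that floor), it is organized differently. The paper proves everything inline: it first shows $p_i \geq \min\{v_{1,i},v_{2,i}\}$, then eliminates $S_2\setminus S_1$ item by item, deduces $S_1 \cup S_2 = N$ and hence $S_1 = N$, and only \emph{then} handles $i \in S_1 \setminus S_2$ by pinning $p_i = v_{1,i}$ (which needs $S_1 = N$) and exhibiting the deviation to $v_{2,i}$. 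You instead isolate ``every PNE is market clearing'' as the sole new content and delegate uniqueness among market-clearing equilibria to Lemma~\ref{Lemma:2}; your treatment of the case where buyer~2 skips item $i$ is genuinely different, since you do not first establish $S_1 = N$ but rather use the profit floor to force $p'_i > v_{1,i}$, conclude $S_1 \subsetneq N$ via the marginal-value inequality, and then derive the contradiction at a \emph{different} seller $j \notin S_1$ whose sales are capped at $v_{2,j}-c_j$. Your decomposition buys modularity (the reduction to Lemma~\ref{Lemma:2} is clean and the profit-floor invariant does all the work uniformly across cases); the paper's version buys a direct structural description of $S_1$ and $S_2$ along the way. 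Both are valid; all the steps you rely on (the unconditional nature of the guarantee under unilateral deviation, the inference $p'_i \leq v_1(i\mid S_1\setminus\{i\})$ from buyer~1's optimality, and the strict-inequality bookkeeping in both subcases of $\min\{v_{1,i},v_{2,i}\}$) check out.
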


\begin{proof}
Let $\textbf{p}$ be an arbitrary pure Nash equilibrium. For each $i \in N$, $p_i \geq \min \{v_{1, i}, v_{2, i}\}$ since if $p_i$ is smaller than both $v_{1, i}$ and $v_{2, i}$, seller $i$ can increase the price and still sell to both the buyers as proved earlier in observation \ref{Observation:lp}. This will only increase his profit.

Let $S_1$ and $S_2$ be the sets purchased by the two buyers respectively. Let $i \in S_2 \setminus S_1$ be arbitrary. Then, $p_i > v_1(i | S_1)$ since otherwise buyer 1 will prefer $S_1 \cup \{i\}$ over $S_1$ due to observation \ref{Observation:prefbuy}. Also, $p_i = v_{2, i} > v_1(i | S_1) \geq v_{1, i}$ since $v_2$ is additive and hence buyer 2 will purchase at maximum price $v_{2, i}$. Since $v_{2, i} - c_i < 2(v_{1, i} - c_i) \leq 2(v_1(i | S_1) - c_i)$, it is better for seller $i$ to set the price equal to $p'_i = v_1(i | S_1)$ since at this price, $v_1(S_1 \cup \{i\}) - p'(S_1 \cup \{i\}) \geq v_1(S_1) - p'(S_1)$ due to submodularity and hence in this case both the buyers will purchase item $i$ due to the tie-breaking rule and seller $i$ will earn more profit. Thus, the set $S_2 \setminus S_1$ is empty.

If for some seller $i$, $i \notin S_1 \cup S_2$, then by setting price $p_i = \min \{v_{1, i}, v_{2, i}\}$, the seller will sell to both buyers and earn a positive profit. So, $S_1 \cup S_2 = N$ in equilibrium. Therefore, $S_1 = N$.

Let some $i \in S_1 \setminus S_2$. In this case, $p_i \geq v_{1, i}$ since otherwise seller $i$ can earn more profit by increasing the price. Also, $p_i \leq v_{1, i}$ since otherwise $N \setminus \{i\}$ is more valuable to buyer 1 than $S_1 = N$. Therefore, $p_i = v_{1, i}$. In this case, $v_{1, i} > v_{2, i}$ and setting price $v_{2, i}$ is more profitable since by doing so, the profit increases to $2(v_{2, i} - c_i) > v_{1, i} - c_i$. Thus, the set $S_1 \setminus S_2$ is empty.

Therefore, $S_1 = S_2 = N$. Therefore, $p_i \leq v_{2, i}$ and $p_i \leq v_{1, i}$ since otherwise at least one of the buyers will not purchase $i$.
Hence, $p_i = \min \{v_{1, i}, v_{2, i}\}$ gives a unique pure Nash Equilibrium and this equilibrium is market clearing. 
\end{proof}

Note that the price can be as high as $v_2(i | S_2 \setminus \{i\})$ in case $v_2$ is an arbitrary submodular valuation and it might hold that $v_2(i | S_2 \setminus \{i\}) - c_i > 2(v_1(i | S_1) - c_i)$ in which case selling to both buyers is not profitable for seller $i$ and hence it is not necessary that $S_2 \setminus S_1$ is empty. Thus, the proof will not follow.

The following corollary follows from the above theorem.

\begin{corollary}
\label{Corollary:uniquePoA}
When $N = [n]$, $M = [2]$, $v_1$ is submodular, $v_2$ is additive, and $\forall i \in N$ $\frac{v_{2,i} - c_i}{2} < v_{1, i} - c_i < 2(v_{2,i} - c_i)$ and $\min \{v_{1, i}, v_{2, i}\} > c_i$, the price of anarchy is 1.
\end{corollary}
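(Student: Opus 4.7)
The plan is to invoke Theorem \ref{Theorem:twobuyersunique} to conclude that there is a unique pure Nash equilibrium $\textbf{p}$, which is market clearing so both buyers are allocated $N$, giving $SW(\textbf{p}) = (v_1(N) - c(N)) + (v_2(N) - c(N))$. It then suffices to show that this quantity also equals $OPT$, since then $OPT/SW(\textbf{p}) = 1$ for the unique PNE.

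First I would recall the observation (already used at the end of Section \ref{sec:multiple-copies}) that the social welfare induced by any pricing depends only on the resulting allocation $(S_1, S_2)$ and equals $\sum_{j \in M}(v_j(S_j) - c(S_j))$, because the monetary transfers between buyers and sellers cancel out. Hence $OPT \leq \max_{S_1, S_2 \subseteq N}\sum_{j}(v_j(S_j) - c(S_j))$, and crucially this maximization decouples across the two buyers.

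Next I would argue that, for each buyer $j \in \{1, 2\}$, the function $f_j(S) := v_j(S) - c(S)$ is maximized at $S = N$. Fix any $S \subseteq N$ and any $i \in N \setminus S$. By submodularity of $v_j$, $v_j(i \mid S) \geq v_j(i \mid N \setminus \{i\}) = v_{j, i}$, and by the hypothesis of the corollary, $v_{j, i} \geq \min\{v_{1, i}, v_{2, i}\} > c_i$. Therefore $f_j(S \cup \{i\}) - f_j(S) = v_j(i \mid S) - c_i > 0$. Iterating this observation over all items in $N \setminus S$ yields $f_j(N) \geq f_j(S)$ for every $S \subseteq N$, so each buyer's welfare is maximized by receiving all items.

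Combining, $OPT = f_1(N) + f_2(N) = SW(\textbf{p})$, and since $\textbf{p}$ is the unique pure Nash equilibrium, the price of anarchy (and the price of stability) is exactly $1$. No real obstacle arises: the corollary is essentially bookkeeping on top of Theorem \ref{Theorem:twobuyersunique}, with the only substantive step being the submodularity plus cost argument that identifies $N$ as each buyer's welfare-maximal bundle.
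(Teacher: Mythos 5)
Your proposal is correct and follows essentially the same route the paper intends: the corollary is stated as an immediate consequence of Theorem \ref{Theorem:twobuyersunique}, combined with the observation (made explicitly at the end of Section \ref{sec:multiple-copies}) that transfers cancel so welfare equals $\sum_j (v_j(S_j)-c(S_j))$, which is maximized at $S_1=S_2=N$ when marginal values exceed costs. Your submodularity argument just spells out that last maximization step, which the paper asserts without detail.
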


\section{The Unlimted Supply Setting for an Arbitrary Number of Buyers}
\label{sec:many-buyers}

Generalising the case of two buyers, the necessary and sufficient condition in case of $m$ buyers and $m$ copies per seller (i.e., $M = [m]$) is given by the following theorem. We prove it using observation \ref{Observation:ci2}, lemma \ref{Lemma:12} and lemma \ref{Lemma:13}. Recall that $v_{j, i} = v_j(N) - v_j(i | N \setminus \{i\})$ $\forall j \in M$ $\forall i \in N$.

For all $i \in N$, we define $\pi_i$ to be a permutation of $[m]$ in non-decreasing order of $v_{\pi_i(j), i}$ ($v_{\pi_i(k), i} \leq v_{\pi_i(k+1), i}$ $\forall k \in [m-1]$ and hence $v_{\pi_i(1),i} = \min_j v_{j,i}$).

\begin{theorem}
The necessary and sufficient condition in case of m buyers ($M = [m]$) and n sellers ($N = [n]$) for the existence of a market clearing pure Nash equilibrium is $\forall i \in N$ $\forall j \in [m]$ $v_{\pi_i(j), i} - c_i \leq \frac{m}{m-j+1}(v_{\pi_i(1), i} - c_i)$ and $\forall i \in N$ $\min_{j \in [m]} v_{\pi_i(j), i} \geq c_i$.
\end{theorem}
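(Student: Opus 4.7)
The plan is to lift the two-buyer argument (Observation \ref{Observation:ci}, Lemmas \ref{Lemma:1} and \ref{Lemma:2}) to general $m$. The key idea is that at a market clearing equilibrium every seller $i$ must price at $p_i \leq \min_j v_{j,i} = v_{\pi_i(1), i}$, while the only candidate upward deviations for seller $i$ are to one of the $m$ threshold prices $v_{\pi_i(j), i}$, each of which retains exactly the $m - j + 1$ buyers whose marginal value for item $i$ is at least $v_{\pi_i(j), i}$. Comparing the current revenue to the revenue from the best such threshold deviation will yield precisely the inequality in the theorem.

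For necessity, I would let $\textbf{p}$ be an arbitrary market clearing pure Nash equilibrium. Since there are $m$ buyers and $m$ copies of each item, market clearing forces $S_{j, \textbf{p}} = N$ for every $j \in M$, so for each $j$ the fact that $N$ is preferred to $N \setminus \{i\}$ gives $p_i \leq v_j(N) - v_j(N \setminus \{i\}) = v_{j, i}$; hence $p_i \leq v_{\pi_i(1), i}$ and the seller's current profit is at most $m(v_{\pi_i(1), i} - c_i)$. The condition $\min_j v_{j, i} \geq c_i$ then follows exactly as in Observation \ref{Observation:ci}: otherwise the profit is negative and the seller strictly gains by pricing out the market. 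Next, for each $j \in [m]$ I would consider the deviation $p'_i = v_{\pi_i(j), i}$ with all other prices fixed; by Observation \ref{Observation:lp} each buyer $\pi_i(k)$ with $k \geq j$ still purchases item $i$, so the deviation revenue is at least $(m - j + 1)(v_{\pi_i(j), i} - c_i)$. The equilibrium condition forces this to be at most $m(p_i - c_i) \leq m(v_{\pi_i(1), i} - c_i)$, and rearranging gives the claimed inequality.

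For sufficiency, I would take the candidate pricing $p_i = v_{\pi_i(1), i}$. Observation \ref{Observation:lp} immediately yields $S_{j, \textbf{p}} = N$ for every $j$, so $\textbf{p}$ is market clearing and the hypothesis $v_{\pi_i(1), i} \geq c_i$ gives non-negative profit. Downward deviations only decrease revenue. For an upward deviation $p'_i > p_i$ the crucial step is to identify the set of buyers who still purchase item $i$ as exactly $\{k : v_{k, i} \geq p'_i\}$; the easy inclusion is Observation \ref{Observation:lp}, and for the converse I would argue that if some buyer $k$ with $v_{k, i} < p'_i$ had optimal set $S \ni i$, then submodularity would give $v_k(\ell \mid S) \geq v_{k, \ell} \geq p_\ell$ for every missing $\ell \notin S$, so the tie-breaking rule would force $S = N$; but then $v_k(N) - v_k(N \setminus \{i\}) = v_{k, i} < p'_i$ would contradict the optimality of $N$ over $N \setminus \{i\}$. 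It follows that the deviation revenue is maximized at some threshold $p'_i = v_{\pi_i(j), i}$ with value $(m - j + 1)(v_{\pi_i(j), i} - c_i)$, which by hypothesis is at most the current revenue $m(v_{\pi_i(1), i} - c_i)$.

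The main technical obstacle is the tie-breaking / $S = N$ argument used in the sufficiency direction, a direct $m$-buyer lift of the corresponding step in Lemma \ref{Lemma:2}. Once that step is in place, off-threshold deviations are automatic because $(m-j+1)(p'_i - c_i)$ is monotone in $p'_i$ on each interval between consecutive threshold prices, so the entire argument reduces to verifying a single inequality per threshold $j \in [m]$, which is exactly the hypothesis of the theorem.
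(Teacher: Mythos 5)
Your proposal is correct and follows essentially the same route as the paper, which proves the theorem via Observation \ref{Observation:ci2}, Lemma \ref{Lemma:12} (necessity, by comparing the market-clearing profit $m(v_{\pi_i(1),i}-c_i)$ against the threshold deviation to $v_{\pi_i(j),i}$ retaining $m-j+1$ buyers) and Lemma \ref{Lemma:13} (sufficiency at $p_i = v_{\pi_i(1),i}$). Your explicit justification that a buyer with $v_{k,i} < p'_i$ cannot still buy item $i$ (forcing $S=N$ via submodularity and the tie-breaking rule) is the same argument the paper spells out in Lemma \ref{Lemma:2} and leaves implicit in Lemma \ref{Lemma:13}.
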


The necessary condition follows from the following observation and lemma.

\begin{observation}
\label{Observation:ci2}
Let $N = [n]$, $M = [m]$ and suppose there exists $i \in N$ such that $\min_{j \in [m]} v_{\pi_i(j), i} < c_i$. Then, there is no market clearing pure Nash equilibrium.
\end{observation}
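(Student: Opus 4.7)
The plan is to adapt the proof of Observation \ref{Observation:ci} almost verbatim, since the core argument is insensitive to the number of buyers. I would argue by contradiction: assume a market clearing pure Nash equilibrium $\textbf{p}$ exists and pick the seller $i$ for whom $\min_{j \in [m]} v_{\pi_i(j), i} < c_i$, i.e., for whom some buyer $j^{\ast}$ has $v_{j^{\ast},i} < c_i$.

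The first step is to show that in any market clearing equilibrium, the price $p_i$ must satisfy $p_i \leq v_{j,i}$ for every buyer $j \in M$. This follows exactly as in Observation \ref{Observation:ci}: because $\textbf{p}$ is market clearing, buyer $j$ chooses $N$, but if $p_i > v_{j,i} = v_j(N) - v_j(N \setminus \{i\})$, then $v_j(N \setminus \{i\}) - p(N \setminus \{i\}) > v_j(N) - p(N)$, so buyer $j$ would strictly prefer $N \setminus \{i\}$ to $N$, contradicting market clearing. Applying this to every $j$ gives $p_i \leq \min_{j \in [m]} v_{j,i} = v_{\pi_i(1), i} < c_i$.

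The second step is to derive the contradiction from seller $i$'s incentives. Since all $m$ buyers purchase item $i$ under $\textbf{p}$, the profit of seller $i$ equals $m(p_i - c_i) < 0$. Seller $i$ can strictly improve this by deviating to any price above $\max_{j} v_{j,i}$ (for example, setting $p'_i$ large enough that no buyer purchases item $i$), yielding zero profit. This contradicts $\textbf{p}$ being a pure Nash equilibrium, and since $\textbf{p}$ was arbitrary, no market clearing pure Nash equilibrium exists.

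There is no real obstacle here: the submodularity of the $v_j$ is used only implicitly through the marginal inequality $v_j(i \mid N \setminus \{i\}) = v_{j,i}$, which is just the definition, and the tie-breaking rule is never invoked because the utility gap is strict. The only thing worth double-checking is that the deviation to opt out is always available, which follows from the fact that opting out produces zero profit and the current profit is strictly negative.
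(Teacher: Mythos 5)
Your proof is correct and follows essentially the same route as the paper's: bound $p_i \leq \min_{j} v_{j,i} < c_i$ using the market-clearing condition, conclude that seller $i$'s profit $m(p_i - c_i)$ is strictly negative, and exhibit a zero-profit deviation. The only cosmetic difference is the deviation chosen: the paper sets the price to $c_i$ (zero profit no matter how many buyers still purchase), whereas you price the item out of the market --- note that a price just above $\max_j v_{j,i}$ need not suffice for that (a buyer's marginal value relative to the set it actually chooses can exceed $v_{j,i}$), but any price above $\max_j v_j(\{i\})$ does, so your argument goes through.
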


\begin{proof}
Let us assume, for the sake of contradiction, that there is a market clearing pure Nash equilibrium given by pricing $\textbf{p}$. $p_i \leq v_{\pi_i(j), i}$ since if $p_i > v_{\pi_i(j), i} = v_{\pi_i(j)}(N) - v_{\pi_i(j)}(N \setminus \{i\})$, then $v_{\pi_i(j)}(N \setminus \{i\}) - p(N \setminus \{i\}) > v_{\pi_i(j)}(N) - p(N)$ since $p_i = p(N) - p(N \setminus \{i\})$ and hence buyer $\pi_i(j)$ will purchase $N \setminus \{i\}$ instead of $N$. Thus, $p_i \leq \min_{j \in [m]}v_{\pi_i(j), i}$. Thus, profit of seller $i$ is $m(p_i - c_i) \leq m(\min_{j \in [m]}v_{\pi_i(j), i} - c_i) < 0$ and hence seller $i$ will earn more profit by deviating to price $c_i$. Since seller $i$ has an incentive to deviate, $\textbf{p}$ is not an equilibrium.

Since $\textbf{p}$ was assumed to be an arbitrary market clearing pure Nash equilibrium, there is none.
\end{proof}

\begin{lemma}
\label{Lemma:12}
If for some $i \in N$ and some $j \in [m]$, $v_{\pi_i(j), i} - c_i > \frac{m}{m-j+1}(v_{\pi_i(1), i} - c_i)$, then there is no market clearing pure Nash Equilibrium.
\end{lemma}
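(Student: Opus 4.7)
The plan is to extend the two-buyer argument from Lemma \ref{Lemma:1} to $m$ buyers by showing that, in any purported market clearing PNE, seller $i$ can strictly increase their profit by raising the price to $v_{\pi_i(j), i}$ and selling to the $m - j + 1$ buyers with the highest marginal values for item $i$.

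First, I would suppose for contradiction that there is a market clearing pure Nash equilibrium $\textbf{p}$. Since the market is cleared, every buyer purchases $N$. The first step is to bound $p_i$ from above by $v_{\pi_i(1), i}$, the smallest marginal value. The argument is identical to the one used in Observation \ref{Observation:ci2}: if $p_i$ exceeded $v_{k, i}$ for some buyer $k$, then buyer $k$ would strictly prefer $N \setminus \{i\}$ to $N$, contradicting market clearance. Applying this to $k = \pi_i(1)$ gives $p_i \leq v_{\pi_i(1), i}$, and so the current profit of seller $i$ is at most $m(v_{\pi_i(1), i} - c_i)$.

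Next, I would consider the deviation in which seller $i$ changes their price to $p'_i = v_{\pi_i(j), i}$, leaving all other prices fixed. By the ordering of $\pi_i$, the $m - j + 1$ buyers $\pi_i(j), \pi_i(j+1), \ldots, \pi_i(m)$ each satisfy $v_{\pi_i(k), i} \geq p'_i$. By Observation \ref{Observation:lp} applied individually to each of these buyers, each of them will still include item $i$ in their purchased set under the new pricing. Hence seller $i$'s profit after the deviation is at least $(m - j + 1)(v_{\pi_i(j), i} - c_i)$.

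Finally, I would combine the two estimates using the hypothesis: since $v_{\pi_i(j), i} - c_i > \frac{m}{m-j+1}(v_{\pi_i(1), i} - c_i)$, we obtain
\[
(m - j + 1)(v_{\pi_i(j), i} - c_i) \;>\; m(v_{\pi_i(1), i} - c_i) \;\geq\; m(p_i - c_i),
\]
so the deviation strictly increases seller $i$'s profit, contradicting that $\textbf{p}$ is a PNE. Since $\textbf{p}$ was arbitrary, no market clearing pure Nash equilibrium can exist. The only subtlety, and the one step I would take care to justify cleanly, is the application of Observation \ref{Observation:lp} to each of the high-value buyers after the deviation — it is precisely the submodularity of their valuations together with the fact that the other sellers' prices are unchanged that guarantees all $m-j+1$ buyers still purchase item $i$.
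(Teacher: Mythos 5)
Your proposal is correct and follows essentially the same route as the paper: bound the equilibrium price of item $i$ above by $v_{\pi_i(1),i}$ via market clearance, then exhibit the deviation to $v_{\pi_i(j),i}$ which, by Observation \ref{Observation:lp}, retains the $m-j+1$ highest-marginal-value buyers and yields strictly more profit under the hypothesis. The only (immaterial) difference is that the paper also pins down $p_i = v_{\pi_i(1),i}$ exactly using a lower bound, whereas you correctly observe that the upper bound alone suffices for the contradiction.
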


\begin{proof}
Let us assume to the contrary that there is a market clearing pure Nash Equilibrium $\textbf{p}$. Then, $\forall k \in N$ $p_k \leq v_{\pi_k(l), k}$ $\forall l \in [m]$ since otherwise some buyer $\pi_k(l)$ will not purchase item $k$ since the utility from $N \setminus \{k\}$ will be strictly higher than from $N$. Also, $\forall k \in N$ $p_k \geq \min_{l \in [m]}{v_{\pi_k(l), k}}$ since otherwise seller $k$ can increase price and still sell to all the buyers as shown in observation \ref{Observation:lp}. Therefore, $p_k = \min_{l \in [m]}{v_{\pi_k(l), k}} = v_{\pi_k(1), k}$ $\forall k \in N$. Thus, the profit of seller $k$ under this pricing is $m(p_k - c_k) = m(v_{\pi_k(1), k} - c_i)$.

Now, if seller $i$ deviates to a price of $v_{\pi_i(j), i}$, it will still sell to $m-j+1$ buyers since new price $v_{\pi_i(j), i} \leq v_{\pi_i(l), i}$ $\forall l \geq j$. Thus, the new profit of seller $i$ is $(m-j+1)(v_{\pi_i(j), i} - c_i) > m(v_{\pi_i(1), i} - c_i)$ and hence the seller $i$ has an incentive to deviate. Therefore, $\textbf{p}$ is not a market clearing pure Nash Equilibrium.

Since $\textbf{p}$ was assumed to be an arbitrary market clearing pure Nash Equilibrium, there is none.
\end{proof}

The sufficient condition follows from the following lemma.

\begin{lemma}
\label{Lemma:13}
If $\forall i \in N$ $\forall j \in [m]$ $v_{\pi_i(j), i} - c_i \leq \frac{m}{m-j+1}(v_{\pi_i(1), i} - c_i)$ and $\forall i \in N$ $\min_{j \in [m]} v_{\pi_i(j), i} \geq c_i$, then there is a unique market clearing pure Nash Equilibrium.
\end{lemma}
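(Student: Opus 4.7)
The plan is to adapt Lemma~\ref{Lemma:2} to arbitrary $m$. The candidate market clearing equilibrium is $p_i = v_{\pi_i(1), i} = \min_{j \in [m]} v_{\pi_i(j), i}$ for every $i \in N$. The hypothesis $\min_{j} v_{\pi_i(j), i} \geq c_i$ yields $p_i \geq c_i$. Since $p_i \leq v_{j, i}$ for every buyer $j$, Observation~\ref{Observation:lp} implies every buyer purchases all of $N$, so the pricing is market clearing and seller $i$ earns profit $m(v_{\pi_i(1), i} - c_i) \geq 0$.

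Next I would verify that no seller has a profitable deviation. Lowering $p_i$ cannot help because all $m$ buyers already purchase item $i$. For an upward deviation to $p'_i$, the key claim is that buyer $\pi_i(l)$ continues to buy item $i$ at the new pricing if and only if $p'_i \leq v_{\pi_i(l), i}$. The ``if'' direction follows from Observation~\ref{Observation:lp} applied to the new pricing (the other sellers' prices are unchanged, and ties favor the larger set). For the ``only if'' direction, if $p'_i > v_{\pi_i(l), i}$ then $N \setminus \{i\}$ strictly dominates $N$ for buyer $\pi_i(l)$, and submodularity further ensures $N \setminus \{i\}$ dominates every other subset (the marginal value of any item $k \neq i$ from any subset of $N \setminus \{i,k\}$ is at least $v_{\pi_i(l), k} \geq v_{\pi_k(1), k} = p_k$, so no other item is dropped). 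Consequently, for $p'_i \in (v_{\pi_i(j-1), i}, v_{\pi_i(j), i}]$, exactly the top $m - j + 1$ buyers $\pi_i(j), \ldots, \pi_i(m)$ buy item $i$, and the revenue on this half-open interval is maximized at $p'_i = v_{\pi_i(j), i}$, giving $(m-j+1)(v_{\pi_i(j), i} - c_i)$. The hypothesis
\[
v_{\pi_i(j), i} - c_i \;\leq\; \frac{m}{m-j+1}\bigl(v_{\pi_i(1), i} - c_i\bigr)
\]
says exactly that this quantity is at most the current profit $m(v_{\pi_i(1), i} - c_i)$, so no upward deviation is profitable.

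For uniqueness, let $\textbf{p}'$ be any market clearing pure Nash equilibrium. Market clearing forces $p'_i \leq v_{j, i}$ for every buyer $j$, otherwise buyer $j$ would strictly prefer $N \setminus \{i\}$ to $N$; hence $p'_i \leq v_{\pi_i(1), i}$. Conversely, if $p'_i < v_{\pi_i(1), i}$, Observation~\ref{Observation:lp} shows that seller $i$ could raise the price up to $v_{\pi_i(1), i}$ while still selling to all $m$ buyers and thereby strictly increase profit, contradicting equilibrium. Thus $p'_i = v_{\pi_i(1), i}$, matching the candidate.

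The main technical obstacle is the ``exactly $m - j + 1$ buyers purchase'' claim after seller $i$'s deviation, since a priori a buyer could reshuffle the entire bundle in response. Submodularity decouples the situation: because the marginal value of every other item $k$ to any buyer remains above $p_k$, each buyer's only real decision is whether or not to include item $i$. Once this decoupling is in hand, the algebraic hypothesis yields the revenue bound essentially by construction.
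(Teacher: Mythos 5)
Your proposal is correct and follows essentially the same route as the paper: the same candidate pricing $p_i = v_{\pi_i(1),i}$, the same decoupling argument that after seller $i$'s upward deviation every buyer still takes all of $N\setminus\{i\}$ (so exactly the buyers with $v_{\pi_i(l),i}\geq p'_i$ keep item $i$), the same revenue comparison $(m-j+1)(v_{\pi_i(j),i}-c_i)\leq m(v_{\pi_i(1),i}-c_i)$, and the same uniqueness argument. Your write-up just makes the ``exactly $m-j+1$ buyers'' step more explicit than the paper does.
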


\begin{proof}
Let $\textbf{p}$ be a pricing such that $p_i = v_{\pi_i(1), i} = \min_{k \in M} v_{k,i}$ $\forall i \in N$.

At this pricing, all the items are sold to all the buyers as shown in observation \ref{Observation:lp} since $p_i \leq v_{j, i}$ $\forall j \in M$ $\forall i \in N$. Thus, no seller has an incentive to lower the price. The profit to seller $i$ at this pricing is $m(v_{\pi_i(1), i} - c_i) \geq 0$.

Let us consider that some seller $i$ increases its price to $p'_i$ and let the new pricing be $\textbf{p}'$. Since $p'_k \leq v_{j, k}$ $\forall j \in M$ $\forall k \in N \setminus \{i\}$, every buyer will purchase all the items other than $i$. Let $j$ be the minimum index such that $p'_i \leq v_{\pi_i(j), i}$. If there is no such $j$, then no buyer will purchase item $i$ and hence this deviation is not profitable to seller $i$. Item $i$ will be purchased by $m-j+1$ sellers and hence the maximum profit seller $i$ will earn is $(m-j+1)(p'_i - c_i) \leq (m-j+1)(v_{\pi_i(j), i} - c_i) \leq m(v_{\pi_i(1), i} - c_i) = m(p_i - c_i)$ and hence it is not profitable for seller $i$ to deviate.

Thus, $\textbf{p}$ is a market clearing pure Nash Equilibrium.

Let us assume that there is another market clearing pure Nash Equilibrium $\textbf{p}'$. Since it is market clearing, $p'_i \leq v_{\pi_i(1), i}$ $\forall i \in N$. If $p'_i < v_{\pi_i(1), i}$ for some $i$, then deviating to price $v_{\pi_i(1), i}$ is profitable to seller $i$ since it will still sell to all the buyers and hence seller $i$ will earn strictly more profit at this new price.

Thus, $\textbf{p}$ is the unique market clearing pure Nash Equilibrium.
\end{proof}

We now show that theorem \ref{Theorem:twobuyersunique} extends to the case of m buyers when at most one of them is submodular.

\begin{theorem}
\label{Theorem:uniqueNashmbuyers}
When $N = [n]$, $M = [m]$, the valuation of one of the buyer is submodular and the valuations for the remaining buyers are additive, and $\forall i \in N$ $\forall j \in [m]$ $v_{\pi_i(j), i} - c_i < \frac{m}{m-j+1}(v_{\pi_i(1), i} - c_i)$ and $\forall i \in N$ $v_{\pi_i(1), i} > c_i$, then there is only one pure Nash Equilibrium and that equilibrium is market clearing. This equilibrium is given by $p_i = v_{\pi_i(1), i}$ $\forall i \in N$.
\end{theorem}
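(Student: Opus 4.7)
The plan is to mirror the structure of the proof of Theorem~\ref{Theorem:twobuyersunique}, now coordinating the unique submodular buyer (call it $b$) with the remaining $m-1$ additive buyers. Existence of the claimed pricing $p_i = v_{\pi_i(1),i}$ as a market clearing PNE is already given by Lemma~\ref{Lemma:13}, so the only real work is uniqueness: I want to show every PNE $\textbf{p}'$ satisfies $p'_i = v_{\pi_i(1),i}$ for all $i \in N$.

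The first step is the lower bound $p'_i \geq v_{\pi_i(1),i}$. If some $p'_i$ were strictly smaller, then because $v_{\pi_i(1),i} \leq v_{k,i}$ for every buyer $k$, Observation~\ref{Observation:lp} forces every buyer to purchase item $i$ at any price up to $v_{\pi_i(1),i}$. Seller $i$ deviating upward to $v_{\pi_i(1),i}$ would still sell $m$ copies, at a strictly larger margin (positive by the hypothesis $v_{\pi_i(1),i} > c_i$), contradicting PNE.

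For the upper bound I would argue by contradiction and let $H = \{i \in N : p'_i > v_{\pi_i(1),i}\}$. For a fixed $i \in H$, let $j_i$ be the smallest index with $p'_i \leq v_{\pi_i(j_i), i}$, so $j_i \geq 2$. The $m - j_i + 1$ buyers at indices $j_i, \ldots, m$ of $\pi_i$ all purchase item $i$: additive ones directly, and $b$ also by Observation~\ref{Observation:lp} when $v_{b,i} \geq p'_i$. Thus $|B_i| \geq m - j_i + 1$, with equality unless $b$ buys $i$ while $v_{b,i} < p'_i$. In the equality case the current profit is at most $(m - j_i + 1)(v_{\pi_i(j_i), i} - c_i)$, which by the strict hypothesis is dominated by $m(v_{\pi_i(1), i} - c_i)$ -- exactly the profit seller $i$ would obtain by deviating to $v_{\pi_i(1), i}$, where all $m$ buyers purchase. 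This contradicts PNE.

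The main obstacle is the remaining case, where $b \in B_i$ yet $v_{b,i} < p'_i$, because the simple counting undercounts $|B_i|$. The plan is to pass the buck to another item: the optimality condition $v_b(i | S_b \setminus \{i\}) \geq p'_i$ combined with $p'_i > v_{b,i} = v_b(i | N \setminus \{i\})$ and submodularity forces $S_b \subsetneq N$, so I can pick $l \in N \setminus S_b$. Optimality of $S_b$ (with the cardinality-based tie-breaking) then gives $p'_l > v_b(l | S_b) \geq v_{b,l} \geq v_{\pi_l(1),l}$, so $l \in H$; moreover $b \notin B_l$, so the troublesome case cannot occur at seller $l$, the clean counting from the previous paragraph applies, and the strict hypothesis produces a profitable deviation at $l$. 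This contradicts PNE and forces $H = \emptyset$, so $p'_i = v_{\pi_i(1),i}$ for every $i$, which by Lemma~\ref{Lemma:13} is the unique market clearing PNE.
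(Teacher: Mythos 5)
Your proof is correct and follows essentially the same route as the paper's: lower-bound every equilibrium price via Observation~\ref{Observation:lp}, then count the buyers purchasing at each price level $v_{\pi_i(j),i}$ and invoke the strict hypothesis to show any price above $v_{\pi_i(1),i}$ admits a profitable deviation to $v_{\pi_i(1),i}$. The only organizational difference is that the paper first proves the submodular buyer purchases every item (so that $p_i \leq v_{b,i}$ then follows from $S_b = N$), whereas you argue by contradiction on the set $H$ and relocate the ``troublesome case'' (the submodular buyer paying above $v_{b,i}$) to an item outside $S_b$ --- the same underlying mechanism, with your version making explicit a step the paper treats tersely.
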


\begin{proof}
Let $\textbf{p}$ be an arbitrary pure Nash equilibrium. For each $i \in N$, $p_i \geq v_{\pi_i(1), i}$ since if $p_i$ is smaller than $v_{\pi_i(1), i}$, seller $i$ can increase the price and still sell to all the buyers as proved earlier in observation \ref{Observation:lp}. This will only increase his profit.

Let there be an item $i \in N$ that is not purchased by the submodular buyer at pricing $\textbf{p}$. Let $\pi_i(j)$ be the submodular buyer for some $j \in [m]$. Then, $p_i = v_{\pi_i(k), i}$ for some $k > j$ and item $i$ is purchased by buyers $\{\pi_i(l) | k \leq l \leq m\}$ (a total of m - k + 1) buyers. Thus, the profit of seller $i$ is $(m - k + 1)(v_{\pi_i(k), i} - c_i) < m(v_{\pi_i(1), i} - c_i)$ which will be the profit if the seller $i$ sets a price of $v_{\pi_i(1), i}$. Thus, $\textbf{p}$ is not an equilibrium. This is a contradiction and hence every item is purchased by the submodular buyer in an equilibrium. Therefore, in an equilibrium, the price of item $i$ is at most $v_{\pi_i(j), i}$ where $\pi_i(j)$ is the submodular buyer.

Let us assume that $p_i \neq v_{\pi_i(k), i}$ $\forall k \in [m]$ for some $i \in N$. Then, $v_{\pi_i(k), i} < p_i < v_{\pi_i(k+1), i}$ for some $k \in [m-1]$ since $v_{\pi_i(1), i} \leq p_i \leq v_{\pi_i(m), i}$. Now if the seller $i$ increases the price to $v_{\pi_i(k+1), i}$, the set of buyers that purchase item $i$ does not change and hence the profit of $i$ increases. Therefore, this is not possible in an equilibrium. Thus, $p_i = v_{\pi_i(k), i}$ for some $k \in [m]$ for all $i \in N$.

Let $p_i = v_{\pi_i(k), i}$ for some $k > 1$ for some $i \in N$. Then, the buyers $\{\pi_i(l) | k \leq l \leq m\}$ purchases the item $i$ and hence the profit of seller $i$ is $(m-k+1)(v_{\pi_i(k), i} - c_i)$. The seller can increase his profit by setting a price of $v_{\pi_i(1), i}$ since $m(v_{\pi_i(1), i} - c_i) > (m-k+1)(v_{\pi_i(k), i} - c_i)$. Thus, this is not possible in an equilibrium.

Therefore, $p_i = v_{\pi_i(1), i}$ $\forall i \in N$ gives the unique pure Nash equilibrium. This equilibrium is market clearing since all the buyers purchases all the items.
\end{proof}

The following corollary follows from the above theorem.

\begin{corollary}
\label{Corollary:mbuyersuniquePoA}
When $N = [n]$, $M = [m]$, the valuation of one of the buyer is submodular and the valuations for the remaining buyers are additive, and $\forall i \in N$ $\forall j \in [m]$ $v_{\pi_i(j), i} - c_i < \frac{m}{m-j+1}(v_{\pi_i(1), i} - c_i)$ and $\forall i \in N$ $v_{\pi_i(1), i} > c_i$, the price of anarchy is 1.
\end{corollary}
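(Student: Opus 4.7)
The plan is a short, direct reduction to the characterization of the unique PNE established in Theorem \ref{Theorem:uniqueNashmbuyers}. Since that theorem guarantees exactly one PNE $\textbf{p}$ given by $p_i = v_{\pi_i(1), i}$ and asserts it is market clearing, both the numerator and denominator of the POA expression collapse: the denominator is simply $SW(\textbf{p})$ for this one equilibrium, and we only need to show it equals $OPT$.

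First, I would compute $SW(\textbf{p})$ explicitly. Market clearing means every buyer $j \in M$ is allocated $S_{j, \textbf{p}} = N$, so by definition
\[
SW(\textbf{p}) \;=\; \sum_{j \in M}\bigl(v_j(N) - c(N)\bigr).
\]

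Next, I would argue that this is at least $OPT$ by bounding any allocation's contribution buyer-by-buyer. Under the hypothesis $v_{\pi_i(1), i} > c_i$, for every buyer $j \in M$ and every seller $i \in N$ we have $v_{j, i} \geq v_{\pi_i(1), i} > c_i$. Since each $v_j$ is submodular (additive is a special case), for any $S \subseteq N$ with $i \notin S$,
\[
v_j(i \mid S) \;\geq\; v_{j, i} \;>\; c_i,
\]
so appending $i$ to $S$ strictly raises $v_j(S) - c(S)$. Iterating, the per-buyer welfare $v_j(S_j) - c(S_j)$ is uniquely maximized at $S_j = N$ for every $j$. Consequently, for any pricing $\textbf{p}'$ inducing allocations $S_{j, \textbf{p}'}$,
\[
SW(\textbf{p}') \;=\; \sum_{j \in M}\bigl(v_j(S_{j, \textbf{p}'}) - c(S_{j, \textbf{p}'})\bigr) \;\leq\; \sum_{j \in M}\bigl(v_j(N) - c(N)\bigr) \;=\; SW(\textbf{p}).
\]

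Hence $OPT = SW(\textbf{p})$, and because the PNE is unique,
\[
\mathrm{POA} \;=\; \max_{\textbf{p}'\text{ a PNE}} \frac{OPT}{SW(\textbf{p}')} \;=\; \frac{SW(\textbf{p})}{SW(\textbf{p})} \;=\; 1.
\]
There is essentially no obstacle here; the only care needed is the submodular marginal inequality $v_j(i \mid S) \geq v_{j, i}$ noted in the preliminaries, which is what lets the strict condition $v_{\pi_i(1),i} > c_i$ propagate to every marginal and force $S_j = N$ as the welfare-maximizing bundle.
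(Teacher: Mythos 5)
Your proof is correct and follows the same route the paper intends: the paper states the corollary as an immediate consequence of Theorem \ref{Theorem:uniqueNashmbuyers} (unique PNE, market clearing) combined with the welfare-maximization observation already made in Section \ref{sec:multiple-copies}, namely that when every marginal value exceeds the cost, the allocation $S_j = N$ for all $j$ maximizes $\sum_j (v_j(S_j) - c(S_j))$. You simply spell out the submodularity step $v_j(i \mid S) \geq v_{j,i} > c_i$ that the paper leaves implicit, so there is nothing to add.
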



Next we give an example in case of $m$ buyers in which we show that the price of anarchy can be as high as $H_m$.

\begin{example}
\label{Example:welfbound}
Consider the market where $N = \{1\}$, $M = [m]$, $v_{i, 1} = \frac{1}{i}$ $\forall i \in [m] \setminus \{1\}$, $v_{1, 1} = 1 + \epsilon$ for some $\epsilon \geq 0$ and $c_1 = 0$. In this case, there is an equilibrium in which the only seller sets a price of $1 + \epsilon$ and at this price, only buyer $1$ purchases. The welfare of this equilibrium is $1 + \epsilon$. The optimal welfare, however, is $H_m + \epsilon$ (where $H_m$ is $m^{\textit{th}}$ harmonic number, $\sum_{i \in [m]} \frac{1}{i}$ which is approximately $\log{m}$) which is achieved when all the buyers purchases the item (at a price at most $\frac{1}{m}$). Therefore, for this equilibrium, the welfare is at most $\frac{1 + \epsilon}{H_m + \epsilon}$ of the optimal. 

\end{example}
Note that the example above can be extended to the case of $n$ sellers by taking additive valuations $v_j(S) = \frac{|S|}{j}$ for all $j \neq 1$ and $v_1(S) = |S|(1 + \epsilon)$. In the case $\epsilon > 0$ in the example above, there is only one equilibrium and the price of anarchy is at least $\frac{H_m + \epsilon}{1 + \epsilon}$ which is arbitrarily close to $H_m$ if $\epsilon$ is small enough. In the case that $\epsilon = 0$ and the sellers decide to offer the lowest price that maximizes their profit, the welfare is $H_m$ which is equal to the optimal welfare, and the price of anarchy in case of this tie breaking rule for sellers is 1.

The following theorem proves that in the case of no production costs, whenever an equilibrium exists, it has a welfare of at least $\frac{1}{H_m}$ of the optimal welfare and hence the price of anarchy is bounded by $H_m$. This bound is tight as shown in the example \ref{Example:welfbound} where there is an equilibrium with welfare exactly $\frac{1}{H_m}$ of the optimal welfare.

\begin{theorem}
\label{Theorem:PoAbound}
Suppose the set of sellers is $N = [n]$, set of buyers is $M = [m]$ and each seller $i$ has $m$ copies of a unique item also denoted $i$ with no cost of production. Suppose $\textbf{p}$ is a pure Nash equilibrium pricing vector under which each buyer $j$ buys the set $S_j$. Then,
\begin{align*}
\sum_{j \in M} v_j(N) \leq H_m \sum_{j \in M} v_j(S_j)
\end{align*}
\end{theorem}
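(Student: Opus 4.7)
The plan is to lower-bound each seller's equilibrium revenue via a deviation argument, extract the harmonic factor $H_m$ by summation, and then combine with the buyer's utility-maximization property.

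First, fix a seller $i \in N$ and any index $j \in [m]$, and consider the deviation in which seller $i$ reprices its item to $p'_i = v_{\pi_i(j), i}$. By Observation~\ref{Observation:lp}, every buyer $\pi_i(l)$ with $l \ge j$ satisfies $v_{\pi_i(l), i} \ge v_{\pi_i(j), i} = p'_i$, so each such buyer includes item $i$ in its optimal bundle at the new pricing. Hence the deviation yields seller $i$ a revenue of at least $(m - j + 1)\,v_{\pi_i(j), i}$, and the PNE property forces
\[\alpha_i p_i \;\ge\; (m - j + 1)\,v_{\pi_i(j), i}\qquad\text{for every } j \in [m].\]

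Rearranging to $v_{\pi_i(j), i} \le \alpha_i p_i/(m - j + 1)$ and summing over $j = 1, \ldots, m$, then using $\sum_{j=1}^m 1/(m - j + 1) = H_m$ together with the fact that $\{\pi_i(j)\}_j$ is a permutation of $M$, I obtain $\sum_{k \in M} v_{k, i} \le H_m \alpha_i p_i$. Summing this over $i \in N$ and noting that $\sum_i \alpha_i p_i = \sum_{k \in M} p(S_k) \le \sum_{k \in M} v_k(S_k)$ (the last inequality coming from $u_k(S_k) \ge u_k(\emptyset) = 0$, i.e.\ $p(S_k) \le v_k(S_k)$), I conclude
\[\sum_{k \in M} \sum_{i \in N} v_{k, i} \;\le\; H_m \sum_{k \in M} v_k(S_k).\]

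For additive valuations $v_k(N) = \sum_i v_{k, i}$ and the theorem is immediate. The hardest step is the general submodular case, where $v_k(N)$ can strictly exceed $\sum_i v_{k, i}$. To bridge this gap, I plan to invoke the buyer's utility-maximization inequality $v_k(N) \le v_k(S_k) + p(N \setminus S_k)$ (which is obtained from $u_k(S_k) \ge u_k(N)$ together with the submodular marginal decomposition of $v_k(N) - v_k(S_k)$ and the strict inequality $p_i > v_k(i | S_k)$ for every $i \notin S_k$, the latter being a consequence of Observation~\ref{Observation:prefbuy} and the tie-breaking rule). Summing this over buyers produces an additive residual $\sum_i (m - \alpha_i) p_i$, and the crux of the submodular extension is to absorb this residual into $(H_m - 1) \sum_k v_k(S_k)$ via a sharpened per-item application of the deviation inequalities derived above.
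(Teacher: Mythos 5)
Your argument is complete only for additive valuations; for general submodular valuations it stops at a plan, and the plan as stated does not go through. The deviation inequalities you derive, $\alpha_i p_i \ge (m-j+1)\,v_{\pi_i(j),i}$, involve the marginals $v_{k,i}=v_k(i\mid N\setminus\{i\})$ taken at the full ground set, and they bound only $\sum_k\sum_i v_{k,i}$, which for submodular $v_k$ can be far smaller than $\sum_k v_k(N)$. After your bridge $v_k(N)\le v_k(S_k)+\sum_{i\notin S_k}v_k(i\mid S_k)< v_k(S_k)+\sum_{i\notin S_k}p_i$, the residual is $\sum_i(m-\alpha_i)p_i$, and the only tool you have left is $\sum_i\alpha_i p_i\le\sum_k v_k(S_k)$; absorbing the residual termwise into $(H_m-1)\sum_i\alpha_i p_i$ would require $m\le H_m\alpha_i$, which fails whenever $\alpha_i<m/H_m$ (e.g.\ $\alpha_i=1$, $m\ge 2$). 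No sharpening of your inequalities can repair this, because they never see the quantities $v_k(i\mid S_k)$ that actually make up the residual.

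The paper closes exactly this gap with a different deviation. For a buyer $j$ with $i\notin S_j$, seller $i$ deviates to the price $v_j(i\mid S_j)$ --- the marginal at the \emph{current} bundle, not at $N$. At that price, every buyer $j'$ with $v_{j'}(i\mid S_{j'}\setminus\{i\})\ge v_j(i\mid S_j)$ purchases item $i$; there are $\mathrm{rank}(j,i)$ such buyers, and $\mathrm{rank}(j,i)\ge\alpha_i+1$ because each of the $\alpha_i$ current purchasers has marginal at least $p_i>v_j(i\mid S_j)$. Equilibrium therefore forces $v_j(i\mid S_j)\le\alpha_i p_i/\mathrm{rank}(j,i)$, and since the $m-\alpha_i$ non-purchasers of $i$ occupy the distinct ranks $\alpha_i+1,\dots,m$, summing gives $\sum_{j\colon i\notin S_j}v_j(i\mid S_j)\le\alpha_i p_i\,(H_m-H_{\alpha_i})\le(H_m-1)\,\alpha_i p_i$, which is exactly the bound your residual needs; your final step $\sum_i\alpha_i p_i\le\sum_k v_k(S_k)$ then finishes as in the paper via $v_k(S_k)\ge\sum_{i\in S_k}v_k(i\mid S_k\setminus\{i\})\ge p(S_k)$. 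One further point you would have to add: items bought by nobody have $\alpha_i=0$ and an uncontrolled price, so your residual term $m\,p_i$ for them is useless; the paper first disposes of these by showing $v_j(i\mid S_j)=0$ for all $j$ (else seller $i$ has a profitable deviation) and replacing $N$ by $U=\bigcup_j S_j$.
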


We use the fact $v(S) + \sum_{i \in T \setminus S} v(i | T \setminus \{i\}) \leq v(T) \leq v(S) + \sum_{i \in T \setminus S} v(i | S)$ for $v$ submodular and $S \subseteq T$ quiet often in the following proof. This fact follows from the properties of submodularity.

\begin{proof}
Let $U = \bigcup_{j=1}^{m} S_j$ and $j \in M$ be fixed. Note that for all $i \in N \setminus U$, we have $v_j(i | S_j) = 0$ (otherwise, seller $i$ could have made non-zero profit by setting $p_i = v_j (i | S_j)$ due to observation \ref{Observation:prefbuy}). Hence, $v_j(N) \leq v_j(U) + \sum_{i \in N \setminus U} v_j(i | U) \leq v_j(U) + \sum_{i \in N \setminus U} v_j(i | S_j) = v_j(U)$. Now, monotonicity implies that $v_j(N) = v_j(U)$. So, if we show that $\sum_{j \in M} v_j(U) \leq H_m \sum_{j \in M} v_j(S_j)$, we are done.

Given $i \in N$ and $j \in M$, define
\begin{align*}
\textit{buy}(i) &= |j \in M \colon i \in S_j|,\\
\textit{rank}(j, i) &= |j' \in M \colon v_{j'}(i | S_{j'} \setminus \{i\}) \geq v_j(S_j \setminus \{i\})|.
\end{align*}

Note that buy(i) is the number of buyers who are currently buying from seller i, and rank(j, i) is the number of buyers who would buy from seller i if his price was equal to $v_j(i | S_j \setminus i)$. Also, note that the current profit of buyer i is $\textit{buy}(i)p_i$.

Note that for all $i \in N$ and $j \in M$ such that $i \notin S_j$, we must have $\text{rank}(j, i)v_j(i|S_j) \leq \text{buy}(i)p_i$
(otherwise seller $i$ could profit more by setting her price $v_j(i|S_j)$). Hence, $v_j(i|S_j) \leq \frac{\text{buy}(i)p_i}{\text{rank}(j, i)}$.
Now, fix $j \in M$ and note that
\begin{align*}
v_j(U) &\leq v_j(S_j) + \sum_{i \in U \setminus S_j} v_j(i | S_j) \\
&\leq v_j(S_j) + \sum_{i \in U \setminus S_j} \frac{\text{buy}(i)p_i}{\text{rank}(j, i)}
\end{align*}

Taking sum over $j \in M$, we have
\begin{align*}
\sum_{j \in M} v_j(U) &\leq \sum_{j \in M} v_j(S_j) + \sum_{j \in M} \sum_{i \in U \setminus S_j} \frac{\text{buy}(i)p_i}{\text{rank}(j, i)} \\
&= \sum_{j \in M} v_j(S_j) + \sum_{i \in U} \sum_{j \in M \colon i \notin S_j} \frac{\text{buy}(i)p_i}{\text{rank}(j, i)} \\
&= \sum_{j \in M} v_j(S_j) + \sum_{i \in U} \text{buy}(i)p_i (H_m - H_{\text{buy}(i)}) \\
&\leq \sum_{j \in M} v_j(S_j) + (H_m - 1) \sum_{i \in U} \text{buy}(i)p_i, \\
\end{align*}
where the third transition (second equality) holds because the buyers who do not buy from seller i have ranks buy(i) + 1, . . . , m, and the final transition (second inequality) holds because $i \in U$ implies buy$(i) \geq 1$(and thus $H_{\text{buy}(i)} \geq 1$).

It remains to show that $\sum_{i \in U} buy(i)p_i \leq \sum_{j \in M} v_j(S_j)$. To see this, fix $j \in M$ and note that
\begin{align*}
v_j(S_j) \geq \sum_{i \in S_j} v_j(i | S_j \setminus \{i\}) \geq \sum_{i \in S_j} p_i.
\end{align*}
Summing over $j \in M$, we get that
\begin{align*}
\sum_{j \in M} v_j(S_j) \geq \sum_{j \in M} \sum_{i \in S_j} p_i = \sum_{i \in U} \text{buy}(i)p_i,
\end{align*}
as desired.
\end{proof}

\section{The case of a single copy of each item and multiple buyers}
\label{sec:limited-supply}

In contrast to the results of Babaioff et al \cite{Babaioff2014}
for a single buyer, and our results in sections \ref{sec:multiple-copies} and \ref{sec:many-buyers}, the question of equilibria becomes more nuanced for multiple buyers with submodular valuations when each seller has a single copy of their item. We start by showing that the problem of pricing in the case of a single copy with costs to sellers reduces to the problem of pricing in the case of single copy with no costs to sellers and having one extra additive buyer. Conversely, if one of the buyer is additive and there are no costs to sellers, this can be reduced to the problem of one less buyer and costs to sellers.

Let $N = [n]$ and $M = [m]$ be the set of sellers and buyers respectively, with $c_i$ the cost of production to seller $i$. We also assume that each seller only produces one copy of their item. Let this problem be denoted by $P = (N, M, \textbf{c}, \{v_j\}_{j=1}^m)$ where $v_j$ is the valuation function of buyer $j$. We construct an equivalent problem $P' = (N', M', \{v'_j\}_{j=1}^{m+1})$ where $v'_j = v_j$ for $j \leq m$ is the valuation of buyer $j$, $N' = N$, $M' = [m+1]$ and $v'_{m+1}$ is an additive valuation of buyer $m+1$ with $v_{m+1, i} = c_i$ $\forall i \in [n]$. In $P'$ we assume that at any given pricing, an item is sold to buyer $m+1$ only if no one else purchases it. Note that there is no cost of production in $P'$. The following observation proves that $P$ and $P'$ are equivalent.

\begin{lemma}
\label{Lemma:cs}
Assuming that item $i$ not being sold implies $p_i = c_i$, there is a one-to-one correspondence in the equilibria in $P$ and $P'$.
\end{lemma}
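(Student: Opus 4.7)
The plan is to establish the bijection via the identity map on price vectors $\mathbf{p} \in \mathbb{R}_+^n$: I will argue that a pricing $\mathbf{p}$ is a pure Nash equilibrium in $P$ (satisfying the stated convention) if and only if it is a pure Nash equilibrium in $P'$. The central observation is that buyers $1,\ldots,m$ have identical valuations in both games, so at any common pricing their purchasing behavior is the same. The only difference between the two games lies in seller $i$'s profit function: in $P$ the seller incurs a production cost $c_i$ per sale, while in $P'$ the extra additive buyer $m+1$ with $v_{m+1,i}=c_i$ acts as a backstop purchaser who guarantees a sale of item $i$ at any price $p_i \leq c_i$.

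First I would show that in any equilibrium of either game, we may assume $p_i \geq c_i$ for every $i \in N$. In $P'$, any price $p_i < c_i$ is strictly dominated by $p_i = c_i$, since the latter still guarantees a sale (either to a regular buyer or to buyer $m+1$, as $p_i \leq v_{m+1,i}$) at strictly higher revenue. In $P$, $p_i < c_i$ either yields strictly negative profit (if sold to a regular buyer) or is ruled out by the convention, which forces $p_i = c_i$ whenever item $i$ is unsold; in either case, deviating to $p_i = c_i$ is at least as good.

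Next, I would compare seller $i$'s profit function on the range $p'_i \geq c_i$, holding $\mathbf{p}_{-i}$ fixed. If some regular buyer buys at price $p'_i$, the profit is $p'_i - c_i$ in $P$ and $p'_i$ in $P'$, so the $P'$ profit exceeds the $P$ profit by exactly $c_i$. If no regular buyer buys, then the $P$ profit is $0$, whereas the $P'$ profit is $c_i$ when $p'_i = c_i$ (buyer $m+1$ steps in) and $0$ when $p'_i > c_i$. Hence the maxima of the two profit functions over $p'_i \geq c_i$ differ by exactly the additive constant $c_i$ and are attained at the same price, once the convention selects $p_i = c_i$ for an unsold item in $P$, matching the unique $P'$-optimum in that case. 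Seller $i$'s best-response correspondences therefore coincide in the two games, which immediately yields the equilibrium bijection.

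The main subtlety, which the stated convention is precisely designed to resolve, is the non-uniqueness of optimal unsold prices in $P$. Absent the convention, a seller whose item cannot profitably sell to any regular buyer would be indifferent between all prices $p_i \geq c_i$ that leave the item unsold, so the set of equilibrium pricings in $P$ would strictly contain those of $P'$ (where only $p_i = c_i$ is optimal in the unsold case). Fixing $p_i = c_i$ in the unsold branch of $P$ pins down the unique price corresponding to the $P'$-optimum and produces the clean one-to-one correspondence asserted by the lemma.
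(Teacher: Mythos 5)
Your proof is correct and follows essentially the same route as the paper's: the identity map on price vectors is the bijection, justified by the fact that (under the convention) seller $i$'s achievable profits in $P'$ are those in $P$ shifted by the additive constant $c_i$, so best responses coincide. You are in fact somewhat more careful than the paper's terse argument, since you explicitly handle the unsold case with $p'_i > c_i$ (where the profit difference is $0$, not $c_i$) and explain why the convention is needed to make the correspondence one-to-one rather than many-to-one.
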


\begin{proof}
If profit to seller $i$ under $\textbf{p}$ in $P$ is $0$, then it will earn exactly $c_i$ profit in $P'$ by selling to buyer $m+1$. If profit to seller $i$ under $\textbf{p}$ in $P$ is not $0$, the same buyer will still purchase in $P'$ but there will be no cost of production, so the profit will be $c_i$ more. By selling to someone else, the profit will not change. If some seller can earn more profit by increasing its price in $P$, the same seller can earn more profit in $P'$ since the difference in profit is constant at the same price; hence there is a one-to-one correspondence in the equilibria between $P$ and $P'$.
\end{proof}

Note that we have not made any assumptions on the valuation functions of the buyers or their budgets. Given lemma \ref{Lemma:cs} and the fact that $P'$ has no costs of production, with the exception of Corollary \ref{Corollary:costsingle}, we will assume in this section that costs to sellers is zero.

%
%

Without loss of generality, we assume that for each $i \in N$, $\exists j \in M$ such that $v_{j, i} > 0$. 
The first observation that we make is that all equilibria are market clearing. 
For this, we do not assume any restrictions on the buyers other than that all buyers are submodular and there are no costs to sellers. There may or may not be more than one submodular buyers.

\begin{observation}
\label{Observation:1}
If $\textbf{p}$ is an $\epsilon$-Nash Equilibrium (for $\epsilon < \min_{i \in N} \{\max_{j \in M} v_{j, i}\}$) in the case of $m$ buyers when their valuation functions are submodular and each seller has a single copy of their unique item, then $\textbf{p}$ is market clearing. It also follows that every pure Nash equililibrium is market clearing since it is a special case with $\epsilon = 0$.
\end{observation}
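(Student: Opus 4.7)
The plan is to argue by contradiction. Suppose $\textbf{p}$ is an $\epsilon$-Nash equilibrium with $\epsilon < \min_{i \in N}\max_{j \in M} v_{j,i}$ that fails market clearing, so some item $i^* \in N$ remains unsold under $\textbf{p}$. Since seller $i^*$'s current profit is then $0$, it suffices to exhibit a deviation yielding strictly more than $\epsilon$, contradicting the $\epsilon$-Nash property.

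Let $j^* \in \argmax_{j \in M} v_{j, i^*}$ and consider the deviation in which seller $i^*$ sets its price to $p'_{i^*} := v_{j^*, i^*}$ while every other seller keeps its price; write $\textbf{p}'$ for the resulting pricing. The hypothesis on $\epsilon$ gives $p'_{i^*} > \epsilon$, and by the standing assumption of this section the costs are zero, so it suffices to show that under $\textbf{p}'$ at least one buyer purchases $i^*$.

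Suppose toward contradiction that no buyer purchases $i^*$ under $\textbf{p}'$. Then $i^*$ is still available when buyer $j^*$ arrives in the fixed order, so the set $A$ of items available to $j^*$ at that moment contains $i^*$. Let $S^* \subseteq A$ be the bundle selected by $j^*$; by assumption $i^* \notin S^*$. Submodularity yields
\[
v_{j^*}(i^* \mid S^*) \;\geq\; v_{j^*}(i^* \mid N \setminus \{i^*\}) \;=\; v_{j^*, i^*} \;=\; p'_{i^*},
\]
so $v_{j^*}(S^* \cup \{i^*\}) - p'(S^* \cup \{i^*\}) \geq v_{j^*}(S^*) - p'(S^*)$. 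Exactly as in Observations \ref{Observation:lp} and \ref{Observation:prefbuy}, the cardinality tie-breaking rule then forces $j^*$ to choose a utility-maximising bundle containing $i^*$, contradicting $i^* \notin S^*$.

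Hence some buyer does purchase $i^*$ under $\textbf{p}'$, giving seller $i^*$ profit $v_{j^*, i^*} > \epsilon$, a gain of more than $\epsilon$ over the original profit of $0$, contradicting the $\epsilon$-Nash property. The case $\epsilon = 0$ then recovers the claim for ordinary pure Nash equilibria. The one subtlety is that changing $p_{i^*}$ to $p'_{i^*}$ can reshuffle the items chosen by earlier buyers and thereby alter the set $A$ that $j^*$ faces; this is where the main (if modest) obstacle sits, and it is defused by the uniform submodular bound $v_{j^*}(i^* \mid S) \geq v_{j^*, i^*}$ for every $S \subseteq N \setminus \{i^*\}$, which makes the argument insensitive to which specific $A$ arises.
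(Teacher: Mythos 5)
Your proof is correct and follows essentially the same route as the paper's: the seller of an unsold item deviates to the price $\max_{j \in M} v_{j,i}$, which exceeds $\epsilon$, and the submodularity bound $v_{j}(i \mid S) \geq v_{j,i}$ together with the cardinality tie-breaking rule guarantees the item is then purchased. Your explicit handling of the sequential-arrival subtlety is more detailed than the paper's one-line assertion, but the underlying argument is identical.
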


\begin{proof}
Let us assume that at pricing $\textbf{p}$, the item of seller $i$ is not being sold. Then, $p_i > \max_{j \in M} \{v_{j, i}\} > 0$. By setting a price of $\max_{j \in M} \{v_{j, i}\}$, the seller will sell his item at a price more than $\epsilon$ and hence this deviation is profitable. So, $\textbf{p}$ is not an $\epsilon$-Nash Equilibrium. This is a contradiction, and hence the item of seller $i$ is sold. Since $i$ is arbitrary, $\textbf{p}$ is market clearing.
\end{proof}

The assumption of submodularity is required in the previous observation as illustrated in the following example. Without this assumption, there is no guarantee that the market will be cleared.

\begin{example}
\label{Example:nosubnoclear}
Let us assume that there is a single buyer with valuation $v(1) = v(2) = 12, v(3) = 19, v(\{1, 2\}) = v(\{1, 3\}) = v(\{2, 3\}) = 20$ and $v(\{1, 2, 3\}) = 30$. It is easily verified that $v$ is subadditive. Let us assume that sellers set price $p_1 = p_2 = p_3 = 9$. In this case, the buyer will purchase $\{3\}$ even though $p_i < v(i|\{1,2,3\} \setminus \{i\})$ $\forall 1\leq i \leq 3$. Note that this pricing is not a pure Nash equilibrium. This shows that a price less than marginal value is not sufficient for the clearing of market if the buyers are not submodular.
\end{example}

\subsection{One submodular buyer with multiple additive buyers}
\label{subsec:one-submodular-buyer}

We will now consider a special case in which there is only one buyer who has an arbitrary submodular valuation function while the other buyers have additive valuations. 

First we note that in the particular case when all the $v_j$ are additive, there is an equilibrium where the seller $i$ sets a price of $\max_{j \in M} \{v_{j, i}\}$. This follows since the seller only has one copy of his item to sell, so the seller will sell it to the one who is willing to pay the highest price.

Now, we consider the case of $m$ buyers when $v_1$ is submodular and $v_j$ is additive for $2 \leq j \leq m$. This problem has an $\epsilon$-Nash Equilibrium which is proved in the following theorem. The proof is similar to the one in \cite{Babaioff2014} in which they prove the existence of $\epsilon$-Nash Equilibrium in the case of one submodular buyer and costs to sellers. Similar to them, we assume that the map that chooses the set purchased by buyer 1 is up-consistent. An up-consistent map is one in which if price of an item $i$ is increased, either the same set is chosen or item $i$ is not chosen. One of the examples of an up-consistent map is a map that chooses lexicographically the first set (the set that comes first in the alphabetical order if the set of alphabets is N and $i$ comes before $j$ if and only if $i < j$) as shown in \cite{Babaioff2014}. Gross-substitutes is a stronger condition than up-consistency since in the former, the items whose price is not changed are sold irrespective of whether the item whose price is increased is sold or not. Also, gross-substitutes condition holds when pricing of multiple items is increased at once which might not be the case for an up-consistent map.

\begin{theorem}
\label{Theorem:epsN}
When $N = [n]$, $M = [m]$, the first buyer is submodular and the remaining buyers are additive, then for any $\epsilon > 0$ there is a market clearing $\epsilon$-Nash equilibrium in the game of pricing between sellers when the map of buyer 1 is up-consistent. We assume that, at any given price, sellers prefer to sell their item to $j$ over $k$ if $j < k$.
\end{theorem}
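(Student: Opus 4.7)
The plan is to construct a market clearing $\epsilon$-Nash equilibrium by an ascending-price procedure, closely mirroring the single-submodular-buyer-with-costs argument of Babaioff et al.\ \cite{Babaioff2014}. For each $i \in N$, let $c_i := \max_{j \geq 2} v_{j,i}$ be the ``outside option'' price at which seller $i$ is guaranteed a sale to some additive buyer (broken to the smallest index by the seller-tiebreaking rule). Since sellers prefer smaller-indexed buyers, seller $i$'s profit at price $p_i$ equals $p_i$ whenever $p_i \leq c_i$ or $i \in S_1(\textbf{p})$, and equals $0$ otherwise, so the $c_i$ behave exactly as Babaioff-style production costs up to a constant shift in revenue.

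I would initialize $p_i^{(0)} := c_i$ for every $i \in N$ and set $T^{(0)} := S_1(\textbf{p}^{(0)})$. The iterative step: while some $i \in T$ admits raising $p_i$ by $\epsilon$ with $i$ still in the new $S_1$, perform that raise and update $\textbf{p}$. The essential use of up-consistency is that, in such a step, $i$'s remaining in the new demand forces (by the up-consistent dichotomy) the new demand to equal the old $T$ exactly, so $T$ is invariant throughout the procedure and no other item silently drops out. Termination is immediate since prices are bounded by $v_1(N)$, so only finitely many $\epsilon$-increments occur. At termination, for each $i \in T$ a further $\epsilon$-bump would drop $i$ from $S_1$, which gives $v_1(i | T \setminus \{i\}) - \epsilon \leq p_i \leq v_1(i | T \setminus \{i\})$ with $p_i \geq c_i$, while for $i \notin T$ we still have $p_i = c_i$ and repeated application of up-consistency along the procedure ensures that $i$ never entered $S_1$ at any intermediate pricing.

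Market clearing is then immediate: each $i \in T$ is sold to buyer $1$ at price $p_i$, and each $i \notin T$ is sold at $c_i$ to the smallest-indexed additive buyer achieving $v_{j,i} = c_i$. The $\epsilon$-NE check splits into two cases on $i$. For $i \in T$: any lowering of $p_i$ preserves the sale to buyer $1$ and strictly reduces revenue; any raising either (by up-consistency) keeps $i$ in demand with new profit at most $v_1(i | T \setminus \{i\}) \leq p_i + \epsilon$, or drops $i$ from demand and yields profit $0$, which is dominated by the current $p_i \geq c_i \geq 0$. For $i \notin T$: raising above $c_i$ keeps $i$ out of $S_1$ by up-consistency and loses the additive sale, while any lower price yields profit at most $c_i$. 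In every case the gain from a unilateral deviation is at most $\epsilon$, so $\textbf{p}$ is a market clearing $\epsilon$-NE.

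The main obstacle is invoking up-consistency cleanly at the right places, both during the ascending procedure (to keep $T$ stable whenever we confirm $i$ is still demanded after a raise) and in the deviation analysis (to rule out the worrying possibility that a large price jump causes buyer $1$ to re-optimize to a different set that still contains $i$, which would invalidate the $\epsilon$-gap). A minor side issue is the initialization at $p_i = c_i$: for items that no additive buyer values, the blanket assumption $\max_j v_{j,i} > 0$ forces $i$ into $T^{(0)}$, so market clearing still holds. Once these are handled, the remaining details are essentially the submodular-marginal bookkeeping of the Babaioff et al.\ argument, with $c_i$ reinterpreted as the best competing additive buyer's offer.
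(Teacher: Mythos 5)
Your proposal is correct and follows essentially the same route as the paper's proof: initialize each price at the best additive buyer's value, repeatedly raise prices in $\epsilon$-increments while the item remains sold, and use up-consistency to keep all other items sold throughout and at termination. The only cosmetic difference is that the paper's loop condition quantifies over all raises of at least $\epsilon$ (making the terminal no-profitable-deviation claim immediate), whereas your loop tests only the raise to exactly $p_i+\epsilon$ and so needs the extra up-consistency/marginal-value argument you sketch to rule out larger profitable jumps.
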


\begin{proof}
We begin by setting prices $p_i = \max_{j \in M \setminus \{1\}} v_{j, i}$ $\forall i \in N$. We find a set $S$ under the pricing $\textbf{ip} = \textbf{p} = \{p_i\}_{i=1}^n$ that is purchased by buyer 1. While there is a seller $i$ that can increase the price to $p'_i$ which is at least $p_i + \epsilon$ such that at this new pricing $\textbf{p}' = (p'_i, p_{-i})$, the item of seller $i$ is sold, we update the price to $p_i + \epsilon$ and proceed. This process terminates since item $i \in S$ is sold at a maximum price of $v_1(i)$.

Initially at price $\textbf{ip}$, all the items are sold. Therefore, the profit of seller $i$ is $\max_{j \in M \setminus \{1\}} v_{j, i}$ $\forall i \in N$. Now, if some seller's price is updated, their profit increases since they increase the price only if they are still able to sell to buyer $1$. And if price of seller $i$ is increased, every other seller still has their item sold since the map of buyer 1 is up-consistent. Let this final pricing be $\textbf{fp}$. We claim that $\textbf{fp}$ is an $\epsilon$-Nash equilibrium.

At price $\textbf{fp}$, all the items are sold since they were sold in the beginning and increasing the price does not change the fact due to up-consistency. Therefore, no seller has an incentive to lower the price. If there is some seller $i$ that sells to buyer $1$ that can increase his profit by more than $\epsilon$ by increasing the price of his item, the price would have increased by at least $\epsilon$ and item $i$ would still have been sold contradicting the definition of $\textbf{fp}$. If there is some seller $i$ that does not sell to buyer $1$, buyer 1 would not purchase even if the seller increases the price of his item, and no other buyer will purchase at a price more than his current price $\max_{j \in M \setminus \{1\}} v_{j, i}$ and thus it is not beneficial for the seller to increase the price. Therefore, $\textbf{fp}$ is an $\epsilon$-Nash equilibrium. Since all the items are sold, this equilibrium is market clearing.
\end{proof}

Example \ref{Example:4.1} does not obey Theorem \ref{Theorem:epsN} because in contrast to the Theorem, the additive buyer is preferred over the submodular buyer by the sellers in the example.

Now, we give a corollary for the case when sellers have costs of production. The following corollary follows immediately from Theorem \ref{Theorem:epsN} and Lemma \ref{Lemma:cs}.

\begin{corollary}
\label{Corollary:costsingle}
When $N = [n]$, $M = [m]$, the first buyer is submodular and the remaining buyers are additive, and there are possible costs of production to sellers, then there is an $\epsilon$-Nash equilibrium in the game of pricing between sellers when the map of buyer 1 is up-consistent. We assume that, at any given price, sellers prefer to sell their item to $j$ over $k$ if $j < k$.
\end{corollary}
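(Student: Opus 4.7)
The plan is to compose the two preceding tools: use Lemma \ref{Lemma:cs} to absorb the seller costs $c_i$ into a fictitious additional additive buyer in a no-cost market, then apply Theorem \ref{Theorem:epsN} to that larger market, and finally transport the equilibrium back via the correspondence.

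Concretely, given the instance $P = (N, M, \textbf{c}, \{v_j\}_{j=1}^m)$ with $v_1$ submodular and $v_2,\dots,v_m$ additive, construct the instance $P' = (N', M', \{v'_j\}_{j=1}^{m+1})$ as in Lemma \ref{Lemma:cs}: $N' = N$, $M' = [m+1]$, $v'_j = v_j$ for $j \leq m$, and $v'_{m+1}$ is the additive valuation with $v'_{m+1,i} = c_i$ for each $i \in N$. In $P'$ there are no production costs, buyer $1$ is submodular, buyers $2,\dots,m+1$ are additive, and by construction buyer $m+1$ is served by any seller only when no earlier-indexed buyer purchases. This agrees exactly with the seller tie-breaking rule assumed in Theorem \ref{Theorem:epsN} (lower-indexed buyers preferred), so Theorem \ref{Theorem:epsN} applies and yields a market clearing $\epsilon$-Nash equilibrium $\textbf{p}^\ast$ of $P'$ whenever the map of buyer $1$ is up-consistent.

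It remains to translate $\textbf{p}^\ast$ into an $\epsilon$-Nash equilibrium of $P$. By Lemma \ref{Lemma:cs} (together with the convention that if item $i$ is unsold in $P$ then $p_i = c_i$), the profit of each seller $i$ in $P$ at any pricing $\textbf{p}$ differs from its profit in $P'$ by exactly the additive constant $c_i$: if some buyer $j \leq m$ purchases, both profits differ by $c_i$ because $P'$ has no cost; if no such buyer purchases, then in $P'$ seller $i$ sells to buyer $m+1$ at price $c_i$ earning $c_i$, while in $P$ the seller's profit is $0$, again a difference of $c_i$. Since deviations by seller $i$ are evaluated at a fixed cost shift $c_i$ independent of $\textbf{p}_i$, any $\epsilon$-improving deviation in $P$ would be an $\epsilon$-improving deviation in $P'$, contradicting the equilibrium property of $\textbf{p}^\ast$. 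Hence $\textbf{p}^\ast$ is also an $\epsilon$-Nash equilibrium of $P$, and the corollary follows.

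There is essentially no obstacle beyond keeping the bookkeeping straight; the only point that requires care is verifying that the priority convention used by Lemma \ref{Lemma:cs} (buyer $m+1$ served only as a last resort) is consistent with the seller tie-breaking hypothesis of Theorem \ref{Theorem:epsN} (prefer smaller-indexed buyers), so that the hypotheses of the theorem genuinely hold on the expanded instance $P'$.
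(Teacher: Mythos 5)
Your proposal is correct and follows exactly the route the paper intends: the paper states that the corollary ``follows immediately from Theorem \ref{Theorem:epsN} and Lemma \ref{Lemma:cs}'', and your argument is precisely that composition (add the fictitious additive buyer $m+1$ with $v'_{m+1,i}=c_i$, apply the theorem to the cost-free instance, and transport the equilibrium back via the constant profit shift). Your explicit check that the last-resort convention for buyer $m+1$ matches the seller tie-breaking hypothesis of the theorem is a useful detail the paper leaves implicit.
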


Note that in the case of production costs, the equilibria might not be market clearing if the costs are sufficiently high.

Now we give two examples that do not follow from Theorem \ref{Theorem:epsN} since the theorem does not say anything about pure Nash equilibria. We give an example where pure Nash equilibria exists and an example where it does not.

Now, we will give an example of a market and a pure Nash Equilibrium in it.

\begin{example}
\label{Example:3}
Consider the market where $M = [2]$, $N = [3]$ and the valuation function is given by the following table:
\begin{center}
\begin{tabular}{ c| c c c c c c c }
 sets & 1 & 2 & 3 & 1, 2 & 1, 3 & 2, 3 & 1, 2, 3 \\
 \hline
 $v_1$ & 110 & 112 & 114 & 123 & 125 & 127 & 136 \\
 $v_2$ & 10 & 12 & 14 & 22 & 24 & 26 & 36 \\
\end{tabular}
\end{center}
$v_1$ is submodular and $v_2$ is additive as can be easily verified. $\textbf{p} = (10, 12, 14)$ is the unique pure Nash Equilibrium and this equilibrium is market clearing when buyer $1$ is preferred by every seller over buyer $2$ at any given pricing if both the buyers are willing to purchase. We see that $\forall i$, $p_i = v_{2, i} > v_{1, i}$ gives the unique pure Nash equilibrium.
\end{example}

\begin{proof}
Suppose there is a Nash Equilibrium $\textbf{p}$ and under pricing $\textbf{p}$, $S_1$ is sold to $1$ and $N \setminus S_1$ to $2$.

If $S_1 = \emptyset$, then $p_i = v_{2, i}$ $\forall i \in N$ and at this price, buyer $1$ will prefer $\{1\}$ over $\emptyset$ and hence this case is not possible.

If $S_1 = \{1\}$, then $p_2 = 12$ and $p_3 = 14$ since they are sold to buyer $2$. Also, $p_1 = 10$ since if $p_1 > 10$, buyer $1$ will prefer to buy $2$ over $1$ and if $p_1 < 10$, $1$ can increase its price to $10$ and still be sold. In this case, buyer $1$ will buy $\{1, 2\}$ over $1$ since it gives him more utility. So, this case is not possible. Similarly, the cases $S_1 = \{2\}$ and $S_1 = \{3\}$ are not possible.

If $S_1 = \{1, 2\}$, then $p_3 = 14$ since it is sold to buyer $2$. Also, $p_1 \geq 10$ and $p_2 \geq 12$ since otherwise they can increase their price and still be sold to buyer $2$. The utility of buyer $1$ from $\{1, 2\}$ is $123 - p_1 - p_2$ and from $\{2, 3\}$ is $127 - p_3 - p_2 = 113 - p_2$, so if $p_1 > 10$, buyer $1$ will prefer $\{2, 3\}$ over $\{1, 2\}$. So, $p_1 = 10$. Similarly, $p_2 = 12$. The same thing happens for the other cases when $|S_1| = 2$.

If $S_1 = N$, then $p_i \geq v_{2, i}$ $\forall i$ since otherwise some seller can increase price to get more profit. At this price, utility from $S_1$ is $136-p_1-p_2-p_3 = 123 - p_1 - p_2 + 13 - p_3 < 123 - p_1 - p_2$ and hence buyer $1$ will prefer $\{1, 2\}$ over $N$, so this case is not possible.
\end{proof}

In the following example, there is no pure Nash equilibrium. A similar example for the case of costs to sellers was discussed in \cite{Babaioff2014}. In this example, we see the sufficient condition for multiple copies case ($\forall i \in N$ $\frac{v_{1, i}}{2} \leq v_{2, i} \leq 2v_{1, i}$) is satisfied but that does not guarantee the existence of pure Nash Equilibrium in the case of a single copy which exists in the case of two copies.

\begin{example}
\label{Example:4}
Consider the market where $M = [2]$, $N = [2]$ and the valuation function is given by the following table:
\begin{center}
\begin{tabular}{ c| c c c }
 sets & 1 & 2 & 1, 2 \\
 \hline
 $v_1$ & 14 & 14 & 25 \\
 $v_2$ & 10 & 12 & 22 \\
\end{tabular}
\end{center}
$v_1$ is submodular and $v_2$ is additive as can be easily verified. If buyer 1 prefers to buy item 2 over item 1, there is no pure Nash equilibrium when buyer $1$ is preferred by every seller over buyer $2$. Note that the map of buyer $1$ is up-consistent.
\end{example}

\begin{proof}
Let $\textbf{p}$ be a pricing with $\min \{p_1, p_2\} > 12$. In this case, buyer 2 will not purchase anything. Also, buyer 1 will not purchase both the items since the utility from purchasing item 1 is $14-p_1$ and utility from purchasing both the items is $25 - p_1 - p_2 = 14 - p_1 + 11 - p_2 < 14 - p_1$. Thus, at least one of the sellers whose item is not sold, has an incentive to set price $v_{2, i}$ to receive a higher profit.

Let $\textbf{p}$ be a pricing with $p_1 < p_2$ and $p_1 \leq 12$. Then, seller $1$ can increase the price to some number between $p_1$ and $p_2$ which is less than 14 and still sell to buyer 1. Thus, the seller has an incentive to increase the price.

Let $\textbf{p}$ be a pricing with $p_2 < p_1$ and $p_2 \leq 12$. Then, seller 2 can increase the price to some number between $p_2$ and $p_1$ which is less than 14 and still sell to buyer 1. Thus, the seller has an incentive to increase the price.

Let $\textbf{p}$ be a pricing with $p_1 = p_2 < 12$. Then, seller 2 can set price to 12 and still be sold to buyer 2, thus increasing profit.

Let $\textbf{p}$ be the pricing with $p_1 = p_2 = 12$. Then, the buyer 1 will purchase only item $2$ and hence seller $1$ has an incentive to deviate to a price less than $12$ giving him positive profit.

Since in all cases, there is an incentive to someone to change their price, there is no pure Nash equilibrium.
\end{proof}

The previous example shows that if all sellers have agreed on the same ordering of buyers, then there might not be a pure Nash Equilibrium.
In the following example we provide a game in which each seller sets both their price and their preference for whom to sell. We show that this game also does not have a pure Nash equilibrium.

\begin{example}
\label{Example:noequichoice}
Consider the market where $M = [2]$, $N = [3]$ and the valuation function is given by the following table:
\begin{center}
\begin{tabular}{ c| c c c c c c c }
 sets & 1 & 2 & 3 & 1, 2 & 1, 3 & 2, 3 & 1, 2, 3 \\
 \hline
 $v_1$ & 16 & 16 & 16 & 30 & 30 & 30 & 43 \\
 $v_2$ & 17 & 17 & 17 & 31 & 31 & 31 & 43 \\
\end{tabular}
\end{center}
$v_1$ and $v_2$ are submodular as can be easily verified. We consider the game in which each seller sets a price along with his own preference over buyers. We prove that there is no pure Nash equilibrium in this game for the valuations considered.
\end{example}

\begin{proof}
Suppose there is a pure Nash equilibrium given by pricing $\textbf{p}$ and some preferences of sellers. $p_1, p_2, p_3 \geq 13$ since at a price of 13 for item $i$, the buyer 1 always purchases item $i$ since $v_{1, i} = 13$. Therefore, the item of each seller is sold at equilibrium.

Let us assume that $p_1 > 14$. Then, the buyer that purchases item 1 will not purchase any other item since $v_j(1|\{2\}) = v_j(1 | \{3\}) = 14$ for both $j = 1$ and $j = 2$. Without loss of generality, let buyer 2 purchases item 1. Then, the other buyer (buyer 1) will purchase item $2$ and $3$. Thus, the price of item 2 and 3 will be at most 14. Seller $2$ and $3$ prefers buyer 1 over buyer 2 since otherwise buyer 2 will purchase one of them instead of $1$. Now, seller 2 can deviate to a price between $14$ and $p_1$ and prefer buyer 2 over buyer 1 resulting in selling item to buyer 2 at an increased price resulting in more profit. Thus, there is no equilibrium where $p_1 > 14$. By symmetry, $p_1, p_2, p_3 \leq 14$.

Now, we consider various cases. The first case is when all sellers prefer buyer i over 3 - i. In this case, one of the sellers can deviate to a price of 16 and prefer to sell to buyer 3 - i. This will let him profit more.

The second case is when two sellers prefer buyer i and third seller prefer buyer 3 - i. Since all the prices are at most 14, the third seller can deviate to price of 16 and still sell to buyer 3 - i earning strictly more profit.

Irrespective of pricing and preferences, there is always a strictly profitable deviation to at least one of the seller and hence there is no pure Nash equilibrium.
\end{proof}



\section{Conclusion and Open problems}
We have begun an extension of the pricing game for sellers when there are multiple submodular buyers and sellers have costs to produce their item. When there is enough supply (i.e., at least as many copies of each seller's item as number of buyers), then we have a reasonably clear understanding of equilibria and market clearing equilibria. Once we have a limited supply, we are far from a good understanding of when we do and when we do not have pure Nash equilibria. It is clear that having multiple submodular buyers (with or without budgets) and multiple sellers with costs to produce their items, raises challenges that have not been present in previous market pricing analysis where there is a single buyer. 

At a very general level, there are two obvious questions to pursue. First, in the full information setting, what conditions make it possible to have well-defined characterizations for when pure equlibria (and market clearing equilibria) exist? And given that pure equilibria exist, what can one say about the price of anarchy and price of stability in more general settings. In particular, the limited supply setting creates the most challenges. 
A second basic question is whether there are natural and efficient ways to arrive at equilibria when sellers do not precisely know the buyer valuations and yet such equilibria are known to exist as in the unlimited item case when there are no budgets. A third basic question concerns any of our multi-buyer settings when we further introduce buyer budgets as studied in Borodin et al \cite{Borodin2016} for the single buyer setting. In Appendix \ref{sec:budget}, we provide an equlibrium result for a very restricted unlimited supply setting, namely where there is one additive buyer with a budget and one additive buyer without a budget. We do not know if there is always an equilibrium when all buyers have additive valuations and budgets. A fourth basic question is what results might hold for more general cost functions (e.g., monotone submodular cost functions).

Given that there are so many directions for future work, it is perhaps best to start with some specific questions where we are most likely to make progress. In that regard, 
one specific question relating to the limited item setting in section \ref{subsec:one-submodular-buyer} is whether there is a ``best order'' for buyer arrival in the sense that for every instance of submodular buyer valuations, there is an ordering of the buyers that will allow sellers to have (market clearing) equilibirum prices. That is, given this best ordering of arrival, buyers will take any available item that is priced at or below their valuation. 
Finally, we note that our tie-breaking rule for buyers is to always break ties in favor of maximum cardinality (and arbitrarily amongst sets of equal size). This implies that a buyer will purchase an item even if the price is equal to the marginal value of the item. This is precisely the idea of a ``selfish agent'' as studied in 
Azar et al \cite{AzarBFFS19} in the context of unit demand buyers (i.e. bipartite matching) and social welfare whereas our primary consideration concerns more general submodular valuations and seller profit. It would be interesting to study other tie-breaking rules such as breaking ties in favor of minimum size which would imply that a buyer will not purchase unless there is a positive increase in his utility. 

\vspace{.2in}

{\large {\bf Acknowledgement}}

We are indebted to Omer Lev and Nisarg Shah for many constructive comments and improvements. In particular, Nisarg Shah provided the proof for Theorem \ref{Theorem:PoAbound}.


\bibliography{pricing-arxiv}

\appendix

\section{A set of conditions  characterizing market clearing equilibrium in case of two additive buyers, one with and one without a budget}
\label{sec:budget}


As evidenced in the single buyer case studied in Borodin et al \cite{Borodin2016}, the introduction of budgets changes the nature of equilibria when we add budgets. In this appendix, we consider one limited scenario where we can characterize the set of Nash equilibria. Namely, 
we consider the case of two additive buyers, one of whom (buyer 1) is budgeted (with a budget $B$). We assume that there are two copies of each item and there are no costs of production to sellers.

We start by making an observation about when the item will not be sold to buyer 1.

\begin{observation}
\label{Observation:3}
If for some $i \in N$, $v_{2, i} > 2v_{1, i}$; then the item $i$ will not be sold to buyer 1 in any equilibrium.
\end{observation}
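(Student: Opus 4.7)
The plan is to argue by contradiction. Suppose $\textbf{p}$ is a pure Nash equilibrium under which buyer $1$ is allocated item $i$ even though $v_{2,i} > 2 v_{1,i}$. My first step is to establish the upper bound $p_i \leq v_{1,i}$. Since $v_1$ is additive, if buyer $1$ keeps item $i$ in her purchased bundle $S_1$ then removing $i$ changes her utility by $p_i - v_{1,i}$; were $p_i > v_{1,i}$, the bundle $S_1 \setminus \{i\}$ would yield strictly higher utility and also fit within the budget $B$, so buyer $1$ would not in fact purchase $i$ regardless of the maximum-cardinality tie-breaking rule. Consequently seller $i$'s current revenue is at most $2 p_i \leq 2 v_{1,i}$, where the factor $2$ allows for the possibility that buyer $2$ is also buying a copy.

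My second step is to exhibit a strictly profitable deviation for seller $i$. I would raise the price to $p'_i = v_{2,i}$; applied to the additive valuation $v_2$, observation \ref{Observation:lp} ensures that buyer $2$ (who has no budget constraint) will purchase item $i$ at this price, so seller $i$'s revenue after the deviation is at least $v_{2,i}$. By hypothesis $v_{2,i} > 2 v_{1,i} \geq 2 p_i$, which strictly exceeds the bound on the seller's original profit and contradicts the equilibrium assumption.

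The only delicate points I anticipate are verifying that buyer $1$'s budget does not somehow pin item $i$ into her bundle when $p_i$ exceeds $v_{1,i}$ (it cannot, because dropping $i$ simultaneously raises utility and relaxes the budget), and checking the deviation revenue carefully. Since $v_{2,i} > 2 v_{1,i} \geq v_{1,i}$, buyer $1$ strictly prefers to omit item $i$ at the new price, so only buyer $2$'s copy is sold and the revenue is exactly $v_{2,i}$; the bound $v_{2,i} > 2 p_i$ still closes the argument. Beyond these checks the proof is essentially a one-line revenue comparison exploiting additivity, so I do not foresee any genuine obstacle.
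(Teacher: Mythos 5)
Your proof is correct and takes essentially the same route as the paper: bound seller $i$'s equilibrium profit by $2v_{1,i}$ (since buyer $1$ only purchases at $p_i \leq v_{1,i}$), then exhibit the deviation to $p'_i = v_{2,i}$, which still sells to the unbudgeted additive buyer $2$ and yields profit $v_{2,i} > 2v_{1,i}$. Your additional checks on the budget and the tie-breaking rule merely make explicit what the paper leaves implicit.
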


\begin{proof}
At any pricing $\textbf{p}$ at which item $i$ is sold to buyer 1, the price is at most $v_{1, i}$ and hence the profit of seller $i$ is at most $2v_{1, i}$. If seller $i$ deviates to a price of $v_{2, i}$, it will still sell to buyer 2 and hence earn a profit of $v_{2, i}$. Thus, no pricing at which item $i$ is sold to buyer 1 is a pure Nash Equilibrium since seller $i$ has an incentive to deviate.
\end{proof}

The case $v_{2, i} > 2v_{1, i}$ is therefore relatively uninteresting and hence we assume that $v_{2, i} \leq 2v_{1, i}$ for all $i$ for further discussion. Let us denote by $S_1$ the set of items purchased by buyer 1 and $S_2$ purchased by buyer 2.
We will now make a further series of observations leading us to a set of conditions that are necessary and sufficient for obtaining a market clearing Nash equilibirum. 

\begin{observation}
\label{Observation:4}
If $\textbf{p}$ is a pure Nash Equilibrium, $i \in S_1$ and $p_i \neq v_{2, i}$ for some $i \in N$, then $v_{1, i} - p_i = \min_{j \in S_1} (v_{1, j} - p_j)$.
\end{observation}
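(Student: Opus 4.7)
The plan is a proof by contradiction: assume $i\in S_1$, $p_i\neq v_{2,i}$, and $v_{1,i}-p_i > v_{1,k}-p_k$ for some $k\in S_1$ that realizes $\min_{j\in S_1}(v_{1,j}-p_j)$; I will exhibit a strictly profitable deviation for seller $i$, namely raising her price by a small $\delta>0$. Note that in any buyer-1 optimum we must have $v_{1,k}-p_k\geq 0$, since otherwise buyer $1$ would strictly improve by dropping $k$ (which is also budget-feasible).

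I would choose $\delta>0$ with $\delta < v_{1,i}-p_i-(v_{1,k}-p_k)$, so that the strict inequality $v_{1,i}-(p_i+\delta) > v_{1,k}-p_k$ survives the perturbation. In the case $p_i<v_{2,i}$ I would further require $\delta < v_{2,i}-p_i$, ensuring that buyer $2$ (additive, no budget) continues to demand item $i$; in the case $p_i>v_{2,i}$ buyer $2$ was not purchasing item $i$ to begin with, so no extra constraint is needed. The hypothesis $p_i\neq v_{2,i}$ is precisely what guarantees that such a $\delta$ exists.

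The crux is to argue that buyer $1$ still purchases item $i$ at the new price. If buyer $1$'s budget was originally slack by more than $\delta$, the claim is easy: $S_1$ remains feasible, its utility drops only by $\delta$, item $i$ still has strictly positive surplus, and buyer $1$ continues to include $i$. The harder case is a binding budget, in which buyer $1$ solves a knapsack. Suppose for contradiction the new optimum $S_1'$ omits $i$; then I would let $R\subseteq S_1'$ be a minimal subset of lowest-surplus items freeing up at least $p_i+\delta-(B-\sum_{j\in S_1'}p_j)$ of budget, so that $(S_1'\setminus R)\cup\{i\}$ is feasible. The contradiction would follow from showing $v_{1,i}-(p_i+\delta) > \sum_{j\in R}(v_{1,j}-p_j)$, which reduces via minimality of $R$ to bounding every surplus occurring in $R$ by $v_{1,k}-p_k$, and this in turn is to be extracted from the optimality of the \emph{original} $S_1$ at the unperturbed price vector $\textbf{p}$. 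This exchange argument is where I expect the main technical obstacle, since prices across items are heterogeneous and there is no clean one-for-one swap; care must be taken that enough budget is freed without overshooting on surplus.

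Once buyer $1$ still buys item $i$ at $p_i+\delta$, seller $i$'s revenue from buyer $1$ strictly increases by $\delta$, while her revenue from buyer $2$ either also rises by $\delta$ (when $p_i<v_{2,i}$) or is unchanged (when $p_i>v_{2,i}$); in either case the deviation is strictly profitable, contradicting that $\textbf{p}$ is a PNE. A minor secondary concern is buyer $1$'s maximum-cardinality tie-breaking, which I would neutralize by choosing $\delta$ generically so no exact utility ties are introduced by the deviation.
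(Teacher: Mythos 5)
Your overall strategy matches the paper's: assume $v_{1,i}-p_i>\min_{j\in S_1}(v_{1,j}-p_j)=v_{1,k}-p_k$, split into the cases $p_i<v_{2,i}$ and $p_i>v_{2,i}$ (using $p_i\neq v_{2,i}$ exactly as the paper does), and exhibit a profitable price increase for seller $i$. The accounting of seller $i$'s profit gain and the cap $\delta<v_{2,i}-p_i$ to retain buyer $2$ are also as in the paper. The problem is the step you yourself flag as the ``main technical obstacle'': showing buyer $1$ still takes item $i$ when the budget binds. You do not close this gap, and the repair you sketch does not work. If the new optimum $S_1'$ omits $i$, the budget that $R$ must free in order to reinsert $i$ is $p_i+\delta-(B-\sum_{j\in S_1'}p_j)$, which is on the order of $p_i$, not of $\delta$; hence $R$ may have to contain many items, and $\sum_{j\in R}(v_{1,j}-p_j)$ can easily exceed $v_{1,i}-(p_i+\delta)$ even if every individual surplus in $R$ is at most $v_{1,k}-p_k$ (take $i$ expensive with small surplus and many cheap items of moderate surplus). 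Moreover, bounding the surpluses of items of $S_1'\setminus S_1$ by $v_{1,k}-p_k$ via optimality of the original $S_1$ is not available under a budget: an item outside $S_1$ may have large surplus and simply not have fit.

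The paper sidesteps this by not using an infinitesimal $\delta$ and by building the budget constraint into the deviation itself: seller $i$ moves to
$p'_i=\min\bigl\{p_i+\tfrac{(v_{1,i}-p_i)-(v_{1,k}-p_k)}{2},\,v_{2,i},\,B-\sum_{j\in S_1\setminus\{i,k\}}p_j\bigr\}$
(dropping the $v_{2,i}$ term when $p_i>v_{2,i}$). The third term guarantees that the \emph{specific} bundle $S_1\setminus\{k\}$, which already contains $i$, remains affordable at the new prices, and the first term guarantees $v_{1,i}-p'_i>v_{1,k}-p_k$, so only a single one-for-one comparison of $i$ against the minimum-surplus item $k$ is needed rather than a general knapsack exchange over an unknown set $R$. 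If you want to follow your $\delta$-perturbation route, you would at minimum need to add the analogous cap $p_i+\delta\le B-\sum_{j\in S_1\setminus\{i,k\}}p_j$ and argue relative to the concrete feasible bundle $S_1\setminus\{k\}$; as written, the binding-budget case of your argument is a genuine gap.
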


\begin{proof}
If $j \in S_1$, then $v_{1, j} - p_j \geq 0$ since otherwise buyer 1 has more utility not purchasing item $j$. Let us assume to the contrary that $v_{1, i} - p_i > \min_{j \in S_1} (v_{1, j} - p_j)$. Let $k \in S_1$ be such that $v_{1, k} - p_k = \min_{j \in S_1} (v_{1, j} - p_j)$. We consider two cases: either $p_i < v_{2, i}$ or $p_i > v_{2, i}$.

We first consider the case $p_i < v_{2, i}$. Then, if seller $i$ deviates to a price $p'_i = \min \{p_i + \frac{(v_{1, i} - p_i) - (v_{1, k} - p_k)}{2}, v_{2, i}, B - \sum_{j \in S_1 \setminus \{i, k\}} p_j\}$, at this new pricing $\textbf{p}'$, item $i$ will be purchased by buyer 2. It will also be purchased by buyer 1 since $v_{1, i} - p'_i \geq \frac{(v_{1, i} - p_i) + (v_{1, k} - p_k)}{2} > v_{1, k} - p'_k$ and $\sum_{j \in S_1 \setminus \{k\}} p'_j \leq \sum_{j \in S_1 \setminus \{k, i\}} p_j + B - \sum_{j \in S_1 \setminus \{i, k\}} p_j \leq B$. Thus, the profit of seller $i$ is increased.

Next we consider the case $p_i > v_{2, i}$. Then, if seller $i$ deviates to a price $p'_i = \min \{p_i + \frac{(v_{1, i} - p_i) - (v_{1, k} - p_k)}{2}, B - \sum_{j \in S_1 \setminus \{i, k\}} p_j\}$, at the new pricing $\textbf{p}'$, item $i$ will be purchased by buyer 1 since $v_{1, i} - p'_i \geq \frac{(v_{1, i} - p_i) + (v_{1, k} - p_k)}{2} > v_{1, k} - p'_k$ and $\sum_{j \in S_1 \setminus k} p'_j \leq \sum_{j \in S_1 \setminus \{k, i\}} p_j + B - \sum_{j \in S_1 \setminus \{i, k\}} p_j \leq B$. Thus, the profit of seller $i$ is increased.

Thus, there is an incentive to deviate in both case and hence $\textbf{p}$ is not an Equilibrium. This is a contradiction and hence $v_{1, i} - p_i = \min_{j \in S_1} (v_{1, j} - p_j)$.
\end{proof}

From the above observation, it is clear that in any equilibrium either the price of an item $i$ will be equal to $v_{2, i}$ or it will be equal to $v_{1, i} - \min_{j \in S_1} (v_{1, j} - p_j)$. Let us investigate market clearing pure Nash Equilibrium. One of the necessary conditions for the existence of market clearing pure Nash Equilibrium is $v_{2, i} \leq 2v_{1, i}$ which follows from observation \ref{Observation:3}. Let us assume that $i < j \implies v_{1, i} - v_{2, i} \geq v_{1, j} - v_{2, j}$ $\forall i, j \in N$ without loss of generality. Then, we have the following observation.

\begin{observation}
\label{Observation:5}
If $\textbf{p}$ is a market clearing pure Nash Equilibrium such that for some item $i \in N$, $p_i = v_{2, i}$ then $p_j = v_{2, j}$ $\forall j < i$.
\end{observation}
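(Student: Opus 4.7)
The plan is to prove Observation 5 by contradiction: assume $\textbf{p}$ is a market clearing pure Nash Equilibrium with $p_i = v_{2,i}$, suppose for the sake of contradiction that there exists some $j < i$ with $p_j \neq v_{2,j}$, and derive an impossibility by combining Observation 4 with the assumed ordering of sellers.

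First I would unpack what market clearing forces in this particular setting. With two copies of each item and two buyers, all copies being sold means each buyer purchases every item, so $S_1 = S_2 = N$. Because buyer 2 is additive and unbudgeted, the fact that $k \in S_2$ gives $v_{2,k} - p_k \geq 0$, i.e., $p_k \leq v_{2,k}$ for every $k \in N$. This uniform upper bound will supply the contradiction at the end.

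Next I would invoke Observation 4 on the item $j$. Since $j \in S_1 = N$ and $p_j \neq v_{2,j}$ by hypothesis, Observation 4 yields
\[
v_{1,j} - p_j \;=\; \min_{l \in S_1}(v_{1,l} - p_l) \;\leq\; v_{1,i} - p_i \;=\; v_{1,i} - v_{2,i},
\]
so $p_j \geq v_{1,j} - v_{1,i} + v_{2,i}$. Now I would apply the ordering assumption: $j < i$ gives $v_{1,j} - v_{2,j} \geq v_{1,i} - v_{2,i}$, equivalently $v_{1,j} - v_{1,i} + v_{2,i} \geq v_{2,j}$. Chaining these inequalities yields $p_j \geq v_{2,j}$, which combined with the market-clearing bound $p_j \leq v_{2,j}$ forces $p_j = v_{2,j}$, contradicting the assumption on $j$.

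The real work is already carried by Observation 4; what remains is simply lining up the two inequalities in the right direction. I expect no significant obstacle beyond noticing that (a) buyer 2's additivity plus market clearing immediately gives the uniform cap $p_k \leq v_{2,k}$, and (b) the seller ordering is oriented so that moving from $i$ to a smaller index $j$ enlarges the gap $v_{1,\cdot} - v_{2,\cdot}$, which is precisely what is needed to push $p_j$ up to $v_{2,j}$ from below.
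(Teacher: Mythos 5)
Your proof is correct and follows essentially the same route as the paper's: apply Observation \ref{Observation:4} to item $j$, bound the minimum by the term at $i$, use the ordering $v_{1,j}-v_{2,j}\geq v_{1,i}-v_{2,i}$ to conclude $p_j\geq v_{2,j}$, and contradict market clearing. The only cosmetic difference is where you locate the final contradiction (forcing $p_j=v_{2,j}$ against the hypothesis $p_j\neq v_{2,j}$, versus the paper's observation that $p_j>v_{2,j}$ makes buyer 2 drop item $j$), which is immaterial.
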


\begin{proof}
Let us assume that for some $j < i$, $p_j \neq v_{2, j}$. Then, we have $v_{1, j} - p_j = \min_{k \in N} (v_{1, k} - p_k)$ from observation \ref{Observation:4}. Therefore, $v_{1, j} - p_j \leq v_{1, i} - p_i = v_{1, i} - v_{2, i}$ and since $j < i$, therefore $v_{1, j} - v_{2, j} \geq v_{1, i} - v_{2, i}$ and hence $v_{1, j} - p_j \leq v_{1, j} - v_{2, j}$. Thus, $p_j \geq v_{2, j}$. Since $p_j \neq v_{2, j}$, $p_j > v_{2, j}$ and hence item $j$ will not be purchased by buyer 2 under pricing $\textbf{p}$. This is a contradiction to the fact that $\textbf{p}$ is a market clearing pure Nash Equilibrium.
\end{proof}

By a similar argument as above, it follows that if $v_{1, i} - v_{2, i} = v_{1, j} - v_{2, j}$ and $p_i = v_{2, i}$ then $p_j = v_{2, j}$. Thus, there exists a $k \in N$ such that $\forall i \leq k$ $p_i = v_{2, i}$ and $\forall i > k$ $p_i = v_{1, i} - \min_{j \in N} (v_{1, j} - p_j)$. We will refer to this $k$ as {\it the boundary of price assignment}.

Using these observations, we can check for the existence of market clearing Nash Equilibria. Firstly, we check that $v_{2, i} \leq 2v_{1, i}$ $\forall i \in N$. This is a necessary condition as shown in observation \ref{Observation:3}. There can be two types of equilibrium: the one in which budget is consumed by buyer 1 and the one in which it is not.

We start with checking the case in which budget is not consumed. The boundary $k$ in this case is the largest index $i$ such that $v_{1, i} - v_{2, i} > 0$. We check the following conditions.

\begin{condit}
The following conditions are required for the case where budget is not consumed.
\begin{itemize}
\label{cond1}
\item $v_{2, i} \leq 2v_{1, i}$ $\forall i \in N$
\item $\sum_{i \in N} p_i < B$ (to ensure that we are in the non-budget consuming case)
\item $p_i = \min \{v_{1, i}, v_{2, i}\}$ $\forall i \in N$
\item For all $i$, $A \subseteq [k] \setminus \{i\}$ such that $p_i = v_{2, i}$, $v_{1, i} > 2v_{2, i}$ and $2v_{2, i} + \sum_{j \in [k] \setminus (A \cup \{i\})} v_{2, j} < B$ it must be that $\sum_{j \in A} (v_{1, j} - v_{2, j}) \geq v_{1, i} - 2v_{2, i}$ (we are ensuring that if seller $i$ increases the price to more than $2v_{2, i}$, buyer 1 will not gain more utility by still purchasing item $i$ and not purchasing some other items)
\end{itemize}
\end{condit}

\begin{lemma}
\label{lemma:necbudnot}
For a pricing $\textbf{p}$ with $\sum_{i \in N} p_i < B$, if $\textbf{p}$ is a market clearing pure Nash equilibrium then set of conditions \ref{cond1} is satisfied.
\end{lemma}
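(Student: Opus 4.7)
The plan is to verify each of the four sub-conditions in turn; conditions 1--3 will follow quickly from market clearing together with the budget-slack hypothesis, while condition 4 will require an explicit profitable deviation to derive the contradiction.

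For condition 1 ($v_{2,i} \leq 2v_{1,i}$), Observation~\ref{Observation:3} directly says that if $v_{2,i} > 2v_{1,i}$ then item~$i$ is never sold to buyer~1 in any equilibrium, contradicting market clearing since each of the two copies of~$i$ must be sold (one to each buyer). Condition 2 is exactly the lemma's hypothesis. For condition 3, market clearing forces both buyers to purchase item~$i$, so in particular $p_i \leq v_{1,i}$ (otherwise buyer~1 strictly prefers to drop~$i$) and $p_i \leq v_{2,i}$, giving $p_i \leq \min\{v_{1,i}, v_{2,i}\}$. If this were strict for some $i$, seller~$i$ could raise $p_i$ by a sufficiently small $\delta > 0$: by the slack hypothesis, $\sum_j p_j + \delta < B$, so buyer~1's budget remains non-binding and she continues to purchase~$i$; buyer~2 does as well. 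This strictly improves profit and contradicts the equilibrium property.

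For condition 4, I argue by contradiction. Suppose there exist $i$ and $A \subseteq [k]\setminus\{i\}$ with $p_i = v_{2,i}$, $v_{1,i} > 2v_{2,i}$, budget slack $2v_{2,i} + \sum_{j \in [k]\setminus(A\cup\{i\})} v_{2,j} < B$, and $\sum_{j \in A}(v_{1,j}-v_{2,j}) < v_{1,i}-2v_{2,i}$. I claim that seller~$i$ profitably deviates to $p'_i := 2v_{2,i} + \epsilon$ for a sufficiently small $\epsilon > 0$. At this price, buyer~2's marginal utility for~$i$ is negative, so she drops~$i$; hence seller~$i$'s new profit is $p'_i$ if buyer~1 still purchases~$i$ and $0$ otherwise. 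To establish that buyer~1 does include~$i$, compare two candidate bundles: $T_1 = [k]\setminus A$ (keep~$i$, drop~$A$) and $T_2 = [k]\setminus\{i\}$ (drop~$i$, keep the rest of $[k]$). Both are feasible for small $\epsilon$: $T_2$ because $\sum_{j \in [k]\setminus\{i\}} v_{2,j} \leq \sum_j p_j - v_{2,i} < B$, and $T_1$ by the Condition~4 budget slack. Additivity of $v_1$ gives
\[
U(T_1) - U(T_2) \;=\; (v_{1,i} - p'_i) - \sum_{j \in A}(v_{1,j} - v_{2,j}),
\]
which is strictly positive for $\epsilon$ small enough by the assumed failure of condition~4. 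Since items $j > k$ have zero marginal utility under their unchanged prices $v_{1,j}$, no bundle excluding~$i$ can exceed the utility of $T_2$; therefore buyer~1's optimal bundle must include~$i$, and seller~$i$ earns $p'_i > 2v_{2,i}$, strictly more than the equilibrium profit $2v_{2,i}$, the desired contradiction.

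The main obstacle is condition~4: the argument hinges on isolating the right two bundles $T_1$ and $T_2$ and verifying that no alternative excluding~$i$ can beat $T_2$. The delicate point is the role of items $j > k$, which are sold at price $v_{1,j}$ and contribute zero marginal utility to buyer~1; their presence can only affect the tie-breaking cardinality, not the utility value, so they are harmless in the comparison above. Budget slack is also essential: it both makes $T_1$ feasible at the deviation price and prevents buyer~1's budget constraint from forcing her to drop~$i$ for independent reasons.
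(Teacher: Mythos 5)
Your proof is correct and follows essentially the same strategy as the paper's: conditions 1--3 come from Observation~\ref{Observation:3}, market clearing, and a small price-raise deviation using the budget slack, and condition~4 is established by having seller~$i$ deviate to a price just above $2v_{2,i}$ and showing buyer~1 must still purchase item~$i$. The only cosmetic difference is that you pin down the best bundle excluding $i$ explicitly via additivity, whereas the paper takes an arbitrary purchased set $S$ not containing $i$ and shows $(S\setminus A)\cup\{i\}$ dominates it; the two arguments are interchangeable.
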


\begin{proof}
Let $\textbf{p}$ be a market clearing pure Nash equilibrium such that $\sum_{i \in N} p_i < B$.
If $p_i > v_{1, i}$ or $p_i > v_{2, i}$ for some $i$, one of the buyer will not purchase $i$ contradicting that $\textbf{p}$ is market clearing.
Thus, $p_i \leq \min\{v_{1, i}, v_{2, i}\}$ $\forall i \in N$.
If $p_i < \min \{v_{1, i}, v_{2, i}\}$ for some $i \in N$, then seller $i$ can set the price to $p'_i = \min \{v_{1, i}, v_{2, i}, B - \sum_{j \in N \setminus \{i\}}\}$ and his item will still be sold to both the buyers since $\sum_{j \in N} p'_j \leq B$ and $p'_j \leq \min \{v_{1, i}, v_{2, i}\}$ $\forall j \in N$ contradicting that $\textbf{p}$ is an equilibrium because there is a profitable deviation for seller $i$. Therefore, $p_i = \min \{v_{1, i}, v_{2, i}\}$ $\forall i \in N$.

Let there exists $i \in N$ and $A \subseteq [k] \setminus \{i\}$ such that $v_{1, i} > 2v_{2, i}$, $2v_{2, i} + \sum_{j \in [k] \setminus (A \cup \{i\})} v_{2, j} < B$ and $\sum_{j \in A} (v_{1, j} - v_{2, j}) < v_{1, i} - 2v_{2, i}$. Let the seller $i$ deviates to a price $p'_i$ strictly between $2v_{2, i}$ and $\min \{v_{1, i}, B - \sum_{j \in [k] \setminus (A \cup \{i\})} v_{2, j}, v_{1, i} - \sum_{j \in A} (v_{1, j} - v_{2, j})\}$. Let the buyer 1 purchases set $S$ not containing $i$ at this new pricing. Then, $\sum_{j \in (S \setminus A) \cup \{i\}} p'_j \leq \sum_{j \in [k] \setminus A} p'_j \leq B$ and $\sum_{j \in (S \setminus A) \cup \{i\}} v_{1, j} - p'_j \geq \sum_{j \in S} (v_{1, j} - p'_j) - \sum_{j \in A} (v_{1, j} - p'_j) + (v_{1, i} - p'_i) > \sum_{j \in S} (v_{1, j} - p'_j)$ and therefore buyer 1 will always purchase item $i$ at this new pricing. Therefore, it is profitable for seller $i$ to deviate and hence $\textbf{p}$ is not a pure Nash equilibrium. Hence, the last condition must be satisfied for $\textbf{p}$ to be a pure Nash equilibrium.
\end{proof}

\begin{lemma}
\label{lemma:sufbudnot}
For a pricing $\textbf{p}$ with $\sum_{i \in N} p_i < B$, if set of conditions \ref{cond1} is satisfied, then $\textbf{p}$ is a market clearing pure Nash equilibrium.
\end{lemma}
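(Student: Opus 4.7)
The plan is to verify market clearing and then rule out profitable deviations by a short case analysis on the sign of $v_{1,i} - v_{2,i}$ and the size of $v_{1,i}$ relative to $2 v_{2,i}$.

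First I would argue market clearing. By the third bullet of Conditions \ref{cond1} we have $p_i = \min\{v_{1,i}, v_{2,i}\}$, so every item has non-negative marginal surplus for each buyer and (by the tie-breaking rule, cf.\ Observation \ref{Observation:lp}) each would buy everything if unconstrained. Buyer 2 is unbudgeted and does so. For buyer 1, the second bullet gives $\sum_i p_i < B$, so the budget accommodates the full bundle $N$; hence $\textbf{p}$ is market clearing and each seller $i$'s profit is $2 p_i$.

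Now fix a seller $i$. Any deviation $p'_i < p_i$ only reduces revenue, so assume $p'_i > p_i$. If $v_{1,i} \leq v_{2,i}$, then $p_i = v_{1,i}$, any $p'_i > v_{1,i}$ loses buyer 1, and the new profit is at most $v_{2,i} \leq 2 v_{1,i} = 2 p_i$ by the first bullet. If $v_{1,i} > v_{2,i}$ but $v_{1,i} \leq 2 v_{2,i}$, then $p_i = v_{2,i}$, any $p'_i > v_{2,i}$ loses the unbudgeted buyer 2, and the resulting profit is at most $v_{1,i} \leq 2 v_{2,i} = 2 p_i$. Neither case yields a strict improvement.

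The critical case is $v_{1,i} > 2 v_{2,i}$ (so $i \in [k]$ and $p_i = v_{2,i}$), where raising the price toward $v_{1,i}$ could in principle exceed the current profit $2 v_{2,i}$. For the deviation to strictly help, buyer 1 must still buy item $i$ at some $p'_i > 2 v_{2,i}$. I would compare buyer 1's options: keep $N \setminus \{i\}$ versus keep $\{i\} \cup (N \setminus D)$ for some $D \subseteq N \setminus \{i\}$. Since items $j \notin [k]$ contribute zero marginal surplus, dropping them only relaxes the budget for free, so without loss of generality the relevant drops lie in $A := D \cap [k] \subseteq [k] \setminus \{i\}$. Strict preference for keeping $i$ gives $v_{1,i} - p'_i > \sum_{j \in A}(v_{1,j} - v_{2,j})$, while budget feasibility forces $p'_i + \sum_{j \in [k] \setminus (A \cup \{i\})} v_{2,j} \leq B$; combined with $p'_i > 2 v_{2,i}$ this yields the hypothesis of the fourth bullet, whose conclusion $\sum_{j \in A}(v_{1,j} - v_{2,j}) \geq v_{1,i} - 2 v_{2,i} > v_{1,i} - p'_i$ then contradicts the strict-preference inequality. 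Hence no profitable deviation exists.

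The main obstacle is the bookkeeping in the last paragraph: buyer 1's optimal response can mix drops of zero-surplus items outside $[k]$ with positive-surplus items inside $[k]$, while the fourth bullet sums only over $[k]$. The key reduction is that dropping outside-$[k]$ items is utility-neutral, so buyer 1 does so first; this reduces the effective budget inequality to precisely the form controlled by the fourth bullet and lets the case-analysis go through.
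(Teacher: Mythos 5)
Your proof is correct and follows essentially the same route as the paper's: market clearing from $p_i = \min\{v_{1,i},v_{2,i}\}$ and $\sum_i p_i < B$, the easy cases $v_{1,i}\le 2v_{2,i}$ and $v_{2,i}\le 2v_{1,i}$ disposed of directly, and the critical case $v_{1,i}>2v_{2,i}$ handled by taking $A$ to be the in-$[k]$ items buyer 1 would drop and invoking the fourth condition to show buyer 1 strictly prefers dropping $i$. The only cosmetic difference is that you compare against $N\setminus\{i\}$ where the paper compares against $[k]\setminus\{i\}$; these have equal utility for buyer 1 since items outside $[k]$ carry zero surplus, a reduction you correctly make explicit.
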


\begin{proof}
Since $p_i = \min \{v_{1, i}, v_{2, i}\}$ $\forall i \in N$ and $\sum_{i \in N} p_i < B$, all the copies of all the items is sold and hence no seller has an incentive to lower the price. Also, since $v_{2, i} \leq 2v_{1, i}$, a seller with price $v_{1, i}$ has no incentive to increase the price. Similarly, if $v_{1, i} \leq 2v_{2, i}$ for some seller $i$, the seller does not have an incentive to increase the price.

If $v_{1, i} > 2v_{2, i}$ for some seller $i$, the only possible increase in profit is when seller $i$ sets a price of $v_{1, i} \geq p'_i > 2v_{2, i}$. Let us assume that buyer 1 purchases a set $S$ containing $i$ at this new pricing. Let $A = [k] \setminus (S \cup \{i\})$. Then, $\sum_{j \in [k] \setminus \{i\}} p'_j < B$ and $\sum_{j \in [k] \setminus \{i\}} (v_{1, j} - p'_j) = \sum_{j \in A} (v_{1, j} - p'_j) + \sum_{j \in S \setminus \{i\}} (v_{1, j} - p'_j) > \sum_{j \in S} (v_{1, j} - p'_j)$ since $A$ satisfies the condition of the lemma. Therefore, buyer 1 will prefer $[k] \setminus \{i\}$ over $S$. Thus, this deviation is not profitable. Since no seller has an incentive to deviate, $\textbf{p}$ is a pure Nash equilibrium.
\end{proof}

We also check the case in which the  budget is consumed. In this case, the pricing is done using a boundary $k$ of price assignment. We iterate over all $k \in N \cup \{0\}$ and for each $k$, we create a new pricing $\textbf{p}_k$ with $p_{k, i} = v_{2, i}$ $\forall i \leq k$ and $p_{k, i} = \frac{(n-k)v_{1, i} - \sum_{j = k+1}^{n}v_{1, j} - \sum_{j = 1}^{k}v_{2, j} + B}{n-k}$ $\forall i > k$ (this pricing comes from the fact that $v_{1, i} - p_i = v_{1, j} - p_j$ $\forall i, j > k$). We check the following conditions for this pricing.

\begin{condit}
The following conditions are required for the case where budget is consumed. These conditions need to be satisfied for some $k \in [n] \cup \{0\}$.
\begin{itemize}
\label{cond2}
\item $\forall i \in N$ $v_{1, i} \geq p_{k, i} \geq 0$
\item $\forall i \in N$ $v_{2, i} \leq 2p_{k, i}$ (otherwise seller $i$ can set price to be $v_{2, i}$ and earn more profit by selling to buyer 2)
\item $v_{1, k+1} - v_{2, k+1} < v_{1, k+1} - p_{k, k+1} \leq v_{1, k} - v_{2, k}$
\item For all $i$, $A \subseteq [k] \setminus \{i\}$, $C \subseteq [n] \setminus [k]$ such that $p_{k, i} = v_{2, i}$, $v_{1, i} - 2v_{2, i} > v_{1, k+1} - p_{k, k+1}$, $2v_{2, i} + \sum_{j \in [k] \setminus (A \cup \{i\})} v_{2, j} + \sum_{j \in [n] \setminus ([k] \cup C)} p_{k, j} < B$ it must be that $\sum_{j \in A \cup C} v_{1, j} - p_{k, j} \geq v_{1, i} - 2v_{2, i}$.
\end{itemize}
\end{condit}

The proofs for the  following two lemmas  are  similar to lemmas \ref{lemma:necbudnot} and \ref{lemma:sufbudnot}.

\begin{lemma}
\label{lemma:necbud}
For a pricing $\textbf{p}$ with $\sum_{i \in N} p_i = B$, if $\textbf{p}$ is a market clearing pure Nash equilibrium then for some $k \in [n] \cup \{0\}$ $\textbf{p} = \textbf{p}_k$ and set of conditions \ref{cond2} is satisfied.
\end{lemma}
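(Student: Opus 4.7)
The plan is to mirror Lemma~\ref{lemma:necbudnot}, but use the earlier Observations to first pin down $\textbf{p}$ to the canonical form $\textbf{p}_k$, and then rule out the deviations corresponding to each bullet of Conditions~\ref{cond2}.

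First, since $\textbf{p}$ is market clearing and each seller has two copies, both buyers purchase all of $N$, so $S_1 = S_2 = N$. Applying Observation~\ref{Observation:4}, every item $i$ with $p_i \ne v_{2,i}$ satisfies $v_{1,i} - p_i = c$, where $c := \min_{j \in N}(v_{1,j} - p_j)$. By Observation~\ref{Observation:5} and the remark following it, the set $\{i : p_i = v_{2,i}\}$ forms an initial segment $[k]$ under the chosen ordering $i < j \Rightarrow v_{1,i} - v_{2,i} \ge v_{1,j} - v_{2,j}$. Substituting $p_i = v_{1,i} - c$ for $i > k$ into the budget-consuming constraint $\sum_{i \in N} p_i = B$ and solving for $c$ reproduces exactly the formula defining $\textbf{p}_k$, so $\textbf{p} = \textbf{p}_k$.

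Next I verify the four bullets of Conditions~\ref{cond2}. The bounds $0 \le p_{k,i} \le v_{1,i}$ are forced by market clearing (otherwise buyer~1 would drop item~$i$) together with nonnegativity of prices. The bound $v_{2,i} \le 2 p_{k,i}$ is forced because otherwise seller~$i$ could deviate to $p'_i = v_{2,i}$, sell one copy to buyer~2, and earn $v_{2,i} > 2 p_{k,i}$. For the boundary inequalities of bullet three, the right inequality $v_{1,k+1} - p_{k,k+1} \le v_{1,k} - v_{2,k}$ is simply $c \le v_{1,k} - v_{2,k}$, from the minimality of $c$; the left strict inequality says $p_{k,k+1} < v_{2,k+1}$, which follows from market clearing ($p_{k,k+1} \le v_{2,k+1}$) together with the maximality of $k$ (equality $p_{k,k+1} = v_{2,k+1}$ would, by the remark after Observation~\ref{Observation:5}, place $k+1$ into the initial segment, contradicting the choice of $k$).

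The fourth bullet is the main step and is the analogue of the last bullet of Lemma~\ref{lemma:necbudnot}, the twist being that buyer~1 may now offset the cost of keeping item~$i$ by dropping items from both $[k]$ (via $A$) and $[n] \setminus [k]$ (via $C$). Assume for contradiction that some $i, A, C$ meet the three hypotheses but violate the conclusion. I construct a profitable deviation in which seller~$i$ raises his price to some $p'_i$ strictly between $2 v_{2,i}$ and $\min\bigl\{ v_{1,i} - \sum_{j \in A \cup C}(v_{1,j} - p_{k,j}),\; B - \sum_{j \in [k] \setminus (A \cup \{i\})} v_{2,j} - \sum_{j \in [n] \setminus ([k] \cup C)} p_{k,j} \bigr\}$. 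The third hypothesis is exactly what makes the second upper bound exceed $2 v_{2,i}$, and the negated conclusion makes the first upper bound exceed $2 v_{2,i}$, so this interval is nonempty. The main obstacle is verifying that buyer~1 still purchases item~$i$ at this price: the plan is to show that $N \setminus (A \cup C)$ remains affordable at the new pricing (enforced by the second upper bound, using $\sum_{j \neq i} p_{k,j} = B - v_{2,i}$) and that its utility strictly exceeds that of any subset omitting $i$, in particular $N \setminus \{i\}$ (enforced by the first upper bound, which by additivity reduces the comparison to $(v_{1,i} - p'_i) > \sum_{j \in A \cup C}(v_{1,j} - p_{k,j})$). Since buyer~2 no longer purchases item~$i$ (as $p'_i > v_{2,i}$), seller~$i$'s new profit is exactly $p'_i > 2 v_{2,i} = 2 p_{k,i}$, contradicting equilibrium.
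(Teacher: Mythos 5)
Your proof is correct and follows essentially the same route as the paper's: pin down $\textbf{p}=\textbf{p}_k$ via Observations \ref{Observation:4} and \ref{Observation:5} plus the budget equation, verify the first three bullets directly, and refute the fourth by a deviation to a price in $(2v_{2,i},\min\{\cdot\})$ whose interval is nonempty by the negated conclusion and the budget hypothesis. The only (harmless) differences are that you drop the redundant term $v_{1,i}-v_{1,k+1}+p_{k,k+1}$ from the min and compare $N\setminus(A\cup C)$ against $N\setminus\{i\}$ rather than against the buyer's actual chosen set $S$; both variants close the argument.
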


\begin{proof}
Let $\textbf{p}$ be a pure Nash equilibrium such that $\sum_{i \in N} p_i = B$.
Then following observations \ref{Observation:4} and \ref{Observation:5}, there exists $k$ such that $p_i = v_{2, i}$ $\forall i \in [k]$ and $v_{1, i} - p_i = v_{1, i + 1} p_{i+1}$ $\forall i \in [n-1] \setminus [k]$. Since $\sum_{i \in N} p_i = B$, we have $\forall i \in [n] \setminus [k]$ $\sum_{j \in [n] \setminus [k]} v_{1, i} - (n - k) (v_{1, i} - p_i) = B - \sum_{j \in [k]} v_{2, j}$ and hence $p_i = \frac{(n - k)v_{1, i} - \sum_{j=k+1}^{n} v_{1, j} - \sum_{j=1}^{k} v_{2, j} + B}{n - k} = p_{k, i}$ $\forall i \in [n] \setminus [k]$. Therefore, $p_i = p_{k, i}$ $\forall i \in [n]$.

$\forall i \in N$ $v_{1, i} \geq p_i$ holds since otherwise if $v_{1, i} < p_{k, i}$, buyer 1 will not purchase $i$ contradicting that $\textbf{p}$ is market clearing. Also, $\forall i \in N$ $p_i \geq 0$ since otherwise seller $i$ can deviate to 0 price and earn more profit. $\forall i \in N$ $v_{2, i} \leq 2p_i$ since otherwise some seller $i$ can deviate to a price of $v_{2, i}$ and sell just to buyer 2 earning more profit. The condition $v_{1, k+1} - v_{2, k+1} < v_{1, k+1} - p_{k+1}$ follows from the fact that $p_{k+1} < v_{2, k+1}$ since otherwise it contradicts the definition of boundary pricing. $v_{1, k+1} - p_{k+1} < v_{1, k} - p_k$ must hold because of observation \ref{Observation:4}.

Now, let us assume that there exists $i \in [k]$, $A \subseteq [k] \setminus \{i\}$, $C \subseteq [n] \setminus [k]$ such that $v_{1, i} - 2v_{2, i} > v_{1, k+1} - p_{k+1}$, $2v_{2, i} + \sum_{j \in [k] \setminus (A \cup \{i\})} v_{2, j} + \sum_{j \in [n] \setminus ([k] \cup C)} p_j < B$ and $\sum_{j \in A \cup C} (v_{1, j} - p_j) < v_{1, i} - 2v_{2, i}$. Let the seller $i$ deviates to a price $p'_i$ strictly between $2v_{2, i}$ and $\min \{v_{1, i} - v_{1, k+1} + p_{k+1}, B - \sum_{j \in [k] \setminus (A \cup \{i\})} v_{2, j} - \sum_{j \in [n] \setminus ([k] \cup C)} p_j, v_{1, i} - \sum_{j \in A \cup C} (v_{1, j} - p_j)\}$. Let the buyer 1 purchases set $S$ not containing $i$ at this new pricing. Then, $\sum_{j \in (S \setminus (A \cup C)) \cup \{i\}} p'_j \leq \sum_{j \in [k] \setminus A} p'_j + \sum_{j \in [n] \setminus ([k] \cup C)} p'_j \leq B$ and $\sum_{j \in (S \setminus (A \cup C)) \cup \{i\}} (v_{1, j} - p'_j) \geq \sum_{j \in S} (v_{1, j} - p'_j) - \sum_{j \in A \cup C} (v_{1, j} - p'_j) + (v_{1, i} - p'_i) > \sum_{j \in S} (v_{1, j} - p'_j)$ and therefore buyer 1 will always purchase item $i$ at this new pricing. Therefore, it is profitable for seller $i$ to deviate and hence $\textbf{p}$ is not a pure Nash equilibrium. Hence, the last condition must be satisfied for $\textbf{p}$ to be a pure Nash equilibrium.
\end{proof}

\begin{lemma}
\label{lemma:sufbud}
For a pricing $\textbf{p}$ with $\sum_{i \in N} p_i = B$, if $\textbf{p} = \textbf{p}_k$ for some $k \in [n] \cup \{0\}$ and set of conditions \ref{cond2} is satisfied for this $k$, then $\textbf{p}$ is a market clearing pure Nash equilibrium.
\end{lemma}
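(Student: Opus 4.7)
My plan is to verify that $\textbf{p}_k$ clears the market, then systematically rule out every profitable unilateral deviation. The argument parallels Lemma~\ref{lemma:sufbudnot}, but the budget is now tight, so any price raise by a seller forces buyer 1 into a knapsack-like re-optimization that has to be analyzed carefully.

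First I would establish market clearing. Buyer 2 buys item $i$ whenever $p_{k,i}\le v_{2,i}$: this holds by construction for $i\le k$, and for $i>k$ it follows from condition 3 together with the ordering $v_{1,j}-v_{2,j}$ non-increasing, giving $v_{1,i}-v_{2,i}\le v_{1,k+1}-v_{2,k+1}<v_{1,k+1}-p_{k,k+1}=v_{1,i}-p_{k,i}$ and hence $p_{k,i}<v_{2,i}$. For buyer 1, condition 1 gives $v_{1,i}-p_{k,i}\ge 0$ for every $i$, and by construction $\sum_i p_{k,i}=B$; since her valuation is additive with non-negative marginals and the full bundle exactly fits the budget, the maximum-utility subset within her budget is all of $N$ (using the max-cardinality tie-breaking rule).

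To rule out deviations, lowering $p_i$ only lowers revenue on two continuing sales and is strictly worse. For $p'_i>p_{k,i}$, I split on the two subclasses. If $i>k$, the common marginal structure gives $v_{1,i}-p'_i<v_{1,k+1}-p_{k,k+1}$, which is strictly smaller than every other marginal in buyer 1's problem: items $j>k$, $j\neq i$ still have marginal $v_{1,k+1}-p_{k,k+1}$, while items $j\le k$ have marginal $v_{1,j}-v_{2,j}\ge v_{1,k}-v_{2,k}\ge v_{1,k+1}-p_{k,k+1}$ by condition 3 and the ordering. Consequently buyer 1's optimal knapsack response drops item $i$ alone, and seller $i$'s new revenue is at most $v_{2,i}\le 2p_{k,i}$ by condition 2, no improvement. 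Raising further, to $p'_i>v_{2,i}$, yields zero revenue and is even worse.

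For $i\le k$, any $p'_i\in(v_{2,i},2v_{2,i}]$ loses buyer 2 and yields at most $p'_i\le 2v_{2,i}=2p_{k,i}$, not profitable. The critical case is $p'_i>2v_{2,i}$, which is where condition 4 is invoked. If buyer 1 still buys some $S\ni i$, writing $A=[k]\setminus(S\cup\{i\})$ and $C=[n]\setminus([k]\cup S)$, feasibility reads $p'_i+\sum_{j\in[k]\setminus(A\cup\{i\})}v_{2,j}+\sum_{j\in[n]\setminus([k]\cup C)}p_{k,j}\le B$, which combined with $p'_i>2v_{2,i}$ yields strictly $2v_{2,i}+\sum_{j\in[k]\setminus(A\cup\{i\})}v_{2,j}+\sum_{j\in[n]\setminus([k]\cup C)}p_{k,j}<B$. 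When $v_{1,i}-2v_{2,i}>v_{1,k+1}-p_{k,k+1}$, condition 4 then gives $\sum_{j\in A\cup C}(v_{1,j}-p_{k,j})\ge v_{1,i}-2v_{2,i}>v_{1,i}-p'_i$, so the utility of $S$ is strictly less than that of $N\setminus\{i\}$ and buyer 1 in fact prefers to drop $i$, sending seller $i$'s revenue to zero. When instead $v_{1,i}-2v_{2,i}\le v_{1,k+1}-p_{k,k+1}$, condition 4 is vacuous, but the same smallest-marginal argument as in the $i>k$ case applies: any single-item drop $\{j\}\neq\{i\}$ loses at least $v_{1,k+1}-p_{k,k+1}\ge v_{1,i}-2v_{2,i}>v_{1,i}-p'_i$ of marginal utility on top of the $p'_i-v_{2,i}$ already absorbed, so dropping $i$ is strictly better.

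The main obstacle is the final case: translating buyer 1's knapsack re-optimization into exactly the format of condition 4, and checking that when the ``extra'' hypothesis $v_{1,i}-2v_{2,i}>v_{1,k+1}-p_{k,k+1}$ fails, the marginal-utility argument transplanted from the $i>k$ subcase still forces buyer 1 to discard item $i$. Everything else is bookkeeping: condition 2 covers $v_{2,i}$-sized deviations, condition 1 guarantees non-negative marginals for buyer 1, and condition 3 enforces that items in $[k]$ have marginal at least that of items outside $[k]$.
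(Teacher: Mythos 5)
Your proof follows essentially the same route as the paper's: verify market clearing from conditions 1--3, dismiss price decreases since both copies are already sold, use condition 2 to kill raises up to $2v_{2,i}$ (resp.\ $v_{2,i}$ for $i>k$), and invoke condition 4 with the sets $A=[k]\setminus(S\cup\{i\})$ and $C=[n]\setminus([k]\cup S)$ to rule out the critical deviation $p'_i>2v_{2,i}$. Your explicit knapsack argument that buyer 1 drops exactly item $i$ when the minimal-marginal seller raises its price (and your separate treatment of the subcase where condition 4 is vacuous) fills in steps the paper merely asserts, but it is the same argument, not a different one.
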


\begin{proof}
Since $0 \leq p_i = p_{k ,i} \leq v_{1, i}$ $\forall i \in N$ and $\sum_{i \in N} p_i = B$, buyer 1 purchases all the items. Since $p_{k + 1} < v_{2, k+1}$, $v_{1, i} - p_i = v_{1, k+1} - p_{k + 1}$ $\forall i > k$ and $v_{1, i} - v_{2, i} \leq v_{1, k+1} - v_{2, k+1}$ we have that $p_i < v_{2, i}$ $\forall i > k$. As $p_i \ v_{2, i}$ $\forall i \in [k]$, all the items are purchased by buyer 2. Since the market is cleared, no seller has an incentive to the lower the price. Also, since $v_{2, i} \leq 2p_i$, a seller with price $p_i < v_{2, i}$ has no incentive to increase the price since if the seller $i$ increases the price, item $i$ will provide the least utility to buyer 1 and hence buyer 1 will not purchase it. Similarly, if $v_{1, i} - 2v_{2, i} \leq v_{1, k+1} - p_{k + 1}$ for some seller $i \in [k]$, the seller does not have an incentive to increase the price since to profit more, the price must be made larger than $2v_{2, i}$ at which buyer 1 will not purchase item $i$.

If $v_{1, i} - 2v_{2, i} > v_{1, k+1} - p_{k + 1}$ for some seller $i$, the only possible increase in profit is when seller $i$ sets a price of $v_{1, i} \geq p'_i > 2v_{2, i}$. Let us assume that buyer 1 purchases a set $S$ containing $i$ at this new pricing. Let $A = [k] \setminus (S \cup \{i\})$ and $C = [n] \setminus ([k] \cup S)$. Then, $\sum_{j \in [n] \setminus \{i\}} p'_j < B$ and $\sum_{j \in [n] \setminus \{i\}} (v_{1, j} - p'_j) = \sum_{j \in A \cup C} (v_{1, j} - p'_j) + \sum_{j \in S \setminus \{i\}} (v_{1, j} - p'_j) > \sum_{j \in S} (v_{1, j} - p'_j)$ since $A, C$ satisfies the condition of the lemma. Therefore, buyer 1 will prefer $[n] \setminus \{i\}$ over $S$. Thus, this deviation is not profitable. Since no seller has an incentive to deviate, $\textbf{p}$ is a pure Nash equilibrium.
\end{proof}


The following theorem follows from lemmas \ref{lemma:necbudnot}, \ref{lemma:sufbudnot}, \ref{lemma:necbud} and \ref{lemma:sufbud}.
\begin{theorem}
\label{Theorem:budsection}
There is a market clearing pure Nash equilibrium if and only if either the set of conditions \ref{cond1} is satisfied or the set of conditions \ref{cond2} is satisfied for some $k \in [n] \cup \{0\}$.
\end{theorem}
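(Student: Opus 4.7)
The plan is to combine the four preceding lemmas through a clean case split on whether the budget is tight at the equilibrium pricing. For the forward direction, I would start by observing that if $\textbf{p}$ is a market clearing pure Nash equilibrium then buyer 1 purchases every item, so feasibility forces $\sum_{i \in N} p_i \leq B$. I then dichotomize on whether this inequality is strict: if $\sum_{i \in N} p_i < B$, Lemma \ref{lemma:necbudnot} immediately yields that conditions \ref{cond1} are satisfied; if instead $\sum_{i \in N} p_i = B$, Lemma \ref{lemma:necbud} yields some $k \in [n] \cup \{0\}$ with $\textbf{p} = \textbf{p}_k$ such that conditions \ref{cond2} hold for this $k$.

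For the reverse direction, I would exhibit an explicit market clearing pure Nash equilibrium in each case. If conditions \ref{cond1} are satisfied, I take the pricing $p_i = \min\{v_{1,i}, v_{2,i}\}$; the second bullet of those conditions already guarantees $\sum_i p_i < B$, and Lemma \ref{lemma:sufbudnot} certifies this pricing as a market clearing pure Nash equilibrium. If instead conditions \ref{cond2} are satisfied for some $k \in [n] \cup \{0\}$, I take $\textbf{p} = \textbf{p}_k$ (the boundary pricing defined immediately before the statement of conditions \ref{cond2}); by construction $\sum_i p_{k,i} = B$, and Lemma \ref{lemma:sufbud} certifies $\textbf{p}_k$ as a market clearing pure Nash equilibrium.

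The substantive content is already absorbed into the four lemmas, so the only delicate point here is arguing that these two regimes genuinely exhaust all possible market clearing equilibria. That exhaustion is supplied by Observations \ref{Observation:4} and \ref{Observation:5}, which force any equilibrium pricing into one of the two templates — either non-budget-consuming with $p_i = \min\{v_{1,i},v_{2,i}\}$, or budget-tight and boundary-structured according to some threshold index $k$. Consequently the main obstacle is discharged at the level of the supporting lemmas, and the theorem itself amounts to gluing the ``necessity'' halves (Lemmas \ref{lemma:necbudnot} and \ref{lemma:necbud}) to the ``sufficiency'' halves (Lemmas \ref{lemma:sufbudnot} and \ref{lemma:sufbud}) across the two regimes.
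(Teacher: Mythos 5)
Your proposal is correct and takes essentially the same approach as the paper, which likewise obtains the theorem by gluing Lemmas \ref{lemma:necbudnot} and \ref{lemma:necbud} (necessity) to Lemmas \ref{lemma:sufbudnot} and \ref{lemma:sufbud} (sufficiency) via the dichotomy $\sum_{i \in N} p_i < B$ versus $\sum_{i \in N} p_i = B$; your write-up merely makes explicit the case split and the feasibility observation $\sum_{i \in N} p_i \leq B$ that the paper leaves implicit.
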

Thus, checking all the conditions above is necessary and sufficient to check the existence of market clearing pure Nash Equilibrium. There are $n+2$ possible market clearing pure Nash Equilibrium ($n+1$ budget clearing, one for each value of $k$ and one non-budget clearing), and we can check for all of them using the above conditions. These are the only possible market clearing pure Nash Equilibria.


\end{document}